\documentclass[11pt]{amsart}
\usepackage{amsmath}
\usepackage{mathrsfs}
\usepackage{amssymb}
\usepackage{amsfonts}
\usepackage{hyperref}
\usepackage[utf8]{inputenc}
\usepackage{cite,amsthm,xypic,cases,graphicx,color}
\oddsidemargin  0.25in
\evensidemargin 0.25in
\textwidth 6.25in

\newtheorem{theorem}{Theorem}
\newtheorem{lemma}[theorem]{Lemma}

\newtheorem{corollary}[theorem]{Corollary}
\theoremstyle{remark}

\newcommand{\C}{{\mathbb C}}
\newcommand{\N}{{\mathbb N}}
\newcommand{\R}{{\mathbb R}}
\newcommand{\Z}{{\mathbb Z}}
\newcommand{\Sph}{{\mathbb S}}

\newcommand{\Ree}{{\mbox{Re}}}

\newcommand{\bnri}{{N\rightarrow\infty}}

\begin{document}
\title{Next order energy asymptotics for Riesz potentials on flat tori}
\date{}

\author[D.P. Hardin]{Douglas P. Hardin}
\author[E.B. Saff]{Edward B. Saff}
\author[B.Z. Simanek]{Brian Z. Simanek}
\author[Y. Su]{Yujian Su}

\date{\today}
\thanks{This research was supported, in part,
by the U. S. National Science Foundation under the grant DMS-1412428 and DMS-1516400.
}

\begin{abstract}
Let $\Lambda$ be a lattice in $\R^d$ with positive co-volume.  Among  $\Lambda$-periodic $N$-point configurations, we consider the  minimal renormalized Riesz $s$-energy  $\mathcal{E}_{s,\Lambda}(N)$.   
While the dominant term in the asymptotic expansion of $\mathcal{E}_{s,\Lambda}(N)$ as $N$ goes to infinity  in the long range case that $0<s<d$ (or $s=\log$) can be obtained from classical potential theory, the next order term(s) require a different approach.     Here we derive the form of  the next order term or terms, namely for $s>0$ they are of the form 
$C_{s,d}|\Lambda|^{-s/d}N^{1+s/d}$ and $-\frac{2}{d}N\log N+\left(C_{\log,d}-2\zeta'_{\Lambda}(0)\right)N$ where we show that the constant $C_{s,d}$ is independent of the lattice $\Lambda$. 
\end{abstract}

\address{Center for Constructive Approximation\\
         Department of Mathematics\\
         Vanderbilt University\\
         1326 Stevenson Center\\
         Nashville, TN, 37240\\
         USA}
\email{doug.hardin@vanderbilt.edu}
\email{edward.b.saff@vanderbilt.edu}
\email{BrianSimanek@baylor.edu}
\email{yujian.su@vanderbilt.edu}

\keywords{Periodic energy, Convergence factor, Ewald summation, Completely monotonic functions, Lattice sums, Epstein Hurwitz Zeta function}

\subjclass[2000]{Primary: 52C35, 74G65; Secondary: 40D15.}

\maketitle

\section{Preliminaries}

Let  $A= [v_1,\ldots, v_d]$ be a $d\times d$ nonsingular matrix   with $j$-th column $v_j$ and let $\Lambda=\Lambda_A:=A\Z^d$ denote the lattice generated by $A$.  The set
\[
\Omega=\Omega_\Lambda:=\left\{w:w=\sum_{j=1}^d\alpha_jv_j,\, \alpha_j\in[0,1),\, j=1,2,\ldots,d\right\}.
\]
 is a   fundamental domain of the quotient space $\R^d/\Lambda$; i.e., the collection of sets $\{\Omega +v\colon v\in \Lambda \}$ tiles $\R^d$.  The volume of
 $\Omega_\Lambda$, denoted by $|\Lambda|$, equals $|\det A|$ and  is called the co-volume of $\Lambda$ (in fact, any measurable fundamental domain of $\Lambda$ has the same volume).  We will let  $\Lambda^*$ denote the the {\em dual lattice} of $\Lambda$ which is the lattice generated by $(A^T)^{-1}$.

 For an interaction potential $F:\R^d\to \R\cup\{+\infty\}$, we consider the {\em $F$-energy} of an $N$-tuple  $\omega_N=(x_1,\ldots, x_N)\in (\R^d)^N$  \begin{equation}
 E_F(\omega_N):=\sum_{k=1}^N\sum_{\substack{j=1\\j\neq k}}^N F(x_k-x_j),
 \end{equation}
 and for a subset $A\subset \R^d$, we consider the {\em $N$-point minimal $F$-energy}
 \begin{equation}
 \mathcal{E}_F(A,N):=\inf_{\omega_N\in A^N} E_F(\omega_N).
 \end{equation}
In this paper we are mostly concerned with  $\Lambda$-periodic potentials $F$, that is, $F(x+v)=F(x)$ for all $v\in \Lambda$.  For such an $F$,
the energy  $E_{ F}(\omega_N)=E_{ F}(x_1,\ldots, x_N)$  is $\Lambda$-periodic in each component $x_k$ and so, without loss of generality,  we may assume that $\omega_N\in (\Omega_\Lambda)^N$; i.e., $ \mathcal{E}_{ F}(\R^d,N)=\mathcal{E}_{ F}(\Omega_\Lambda,N) $.   Specifically, in this paper we consider periodized Riesz potentials and periodized logarithmic potentials and $A=\R^d$ (or, equivalently $A=\Omega_\Lambda$) as we next describe.

 For $s>d$, we consider the periodic potential generated by the Riesz $s$-potential $f_s(x)=|x|^{-s}$ as follows
\begin{equation}\label{Fscp}
\zeta_{\Lambda}(s;x):=\sum_{v\in \Lambda}\frac{1}{|x+v|^s},\qquad s>d, x\in \R^d,
\end{equation}
 which is finite for $x\not \in \Lambda$ and equals $+\infty$ when $x\in \Lambda$. Then $ \zeta_{\Lambda}(s;x-y)$ can be considered to be the energy required to place a unit charge at location $x\in \R^d$  in the presence
 of unit charges placed at $y+\Lambda=\{y+v\colon v\in \Lambda\}$ with charges interacting through the Riesz $s$-potential.
   For $s\le d$, the sum on the right side of \eqref{Fscp} is infinite for all $x\in \R^d$.  In \cite{hardin2014periodic},  $\Lambda$-periodic energy problems for a class of long range potentials are considered and it is shown that for the case of the Riesz potential  with $s\le d$, the appropriate energy problem can be obtained through analytic continuation.  Specifically,    it follows from
   Theorems 1.1 and 3.1 of \cite{hardin2014periodic} that the potential $F_{s,\Lambda} (x)$ defined by
\begin{equation}\label{Fsdef} F_{s,\Lambda} (x):=\sum_{v\in\Lambda}\int_1^\infty e^{-|x+v|^2t}\frac{t^{\frac{s}{2}-1}}{\Gamma(\frac{s}{2})}\mathrm{d}t+\frac{1}{|\Lambda|}\sum_{w\in\Lambda^*\setminus\{0\}}e^{2\pi i w\cdot x}\int_0^1 \frac{\pi^{\frac{d}{2}}}{t^\frac{d}{2}}e^{-\frac{\pi^2|w|^2}{t}}\frac{t^{\frac{s}{2}-1}}{\Gamma(\frac{s}{2})}\mathrm{d}t,
\end{equation}
  is, for  fixed  $x\in\R^d\setminus \Lambda$,  an entire function of $s$, and
\begin{equation}\label{FsLam}
  F_{s,\Lambda} (x)=\zeta_{\Lambda}(s;x)+\frac{2\pi^{\frac{d}{2}}|\Lambda|^{-1}}{\Gamma(\frac{s}{2})(d-s)}, \qquad s>d,\end{equation}
showing that \eqref{Fsdef} provides an analytic continuation of $\zeta_{\Lambda}(\cdot;x)$  to $\C\setminus \{d\}$ (note that $\zeta_{\Lambda}(s;x)$ has a simple pole at $s=d$ for $x\not\in\Lambda$).   We refer to (the analytically extended) $\zeta_{\Lambda}(s;x)$ as  the {\em Epstein Hurwitz zeta function for the lattice $\Lambda$}.  We shall also need the {\em Epstein zeta function} defined for $s>d$ by
\begin{equation}
 \zeta_{\Lambda}(s):=\sum_{v\in \Lambda\setminus\{0\}}\frac{1}{|v|^{s}},
\end{equation}
and continued analytically as above for $s\in \C\setminus \{d\}$. We remark that \eqref{Fsdef} is derived from
the formula
\begin{equation}\label{eq2}
   r^{-s}=\int_0^\infty e^{-tr^2}\frac{t^{\frac{s}{2}-1}}{\Gamma(\frac{s}{2})}\mathrm{d}t, \qquad (r>0),
\end{equation}
   together with the Poisson Summation Formula.

In \cite{hardin2014periodic}, analytic continuation and periodized Riesz potentials are connected through the use of {\em convergence factors}; i.e., a parametrized family of functions $g_a:\R^d\to [0,\infty)$ such that
\begin{itemize}
\item[(a)] for $a>0$, $f_s(x)g_a(x) $ decays sufficiently rapidly as $|x|\to \infty$ so that $$F_{s,a,\Lambda} (x):=\sum_{v\in \Lambda}f_s(x+v)g_a(x+v)$$ converges to a finite value for all $x\not \in \Lambda$, and
\item[(b)]  $\lim_{a\to 0^+}g_a(x)=1$ for all $x\in \R^d\setminus \{0\}$.
\end{itemize}
For example, the  family of Gaussians $g_a(x)=e^{-a|x|^2}$ is a convergence factor for Riesz potentials.  In \cite{hardin2014periodic}, it is shown that for a large class of convergence factors  $\{g_a\}_{a>0}$ (including the Gaussian convergence family) one may choose $C_a$ (depending on the convergence factor $\{g_a\}_{a>0}$) such that
\begin{equation}\label{convfac}F_{s,\Lambda} (x)=\lim_{a\to 0^+} \left(F_{s,a,\Lambda} (x)-C_a\right).\end{equation}
Then, for $a>0$,   $F_{s,a,\Lambda} (x-y)$   represents the energy required to place a unit charge at location $x$  in the presence
 of unit charges placed at $y+\Lambda=\{y+v\colon v\in \Lambda\}$ with charges interacting through the potential $f_s(x)g_a(x)$.
 This leads us to  consider, for $s>0$, the \textit{periodic Riesz $s$-energy of $\omega_N$ associated with the lattice $\Lambda$} defined by
\begin{align}\label{e1def}
E_{s,\Lambda}( \omega_N):=\sum_{{1\leq k,j\leq N}\atop{k\neq j}}F_{s,\Lambda} (x_k-x_j),
\end{align}
as well as the {\em minimal $N$-point periodic Riesz $s$-energy}
\begin{equation}
\mathcal{E}_{s,\Lambda} ( N):=\mathcal{E}_{F_{s,\Lambda} }(\R^d;N)=\inf_{\omega_N\in (\R^d)^N} E_{s,\Lambda} ( \omega_N).
\end{equation}

We shall also consider the \textit{periodic logarithmic potential associated with $\Lambda$}   generated  from the logarithmic potential $F_{\log}(x):=\log(1/|x|)$  using convergence factors as above and resulting in the definition
 \begin{equation}\label{Flogdef}
 F_{\log,\Lambda}(x)  :=\sum_{v\in\Lambda}\int_1^\infty e^{-|x+v|^2t}\frac{\mathrm{d}t}{t}+\frac{1}{|\Lambda|}\sum_{w\in\Lambda^*\setminus\{0\}}e^{2\pi i w\cdot x}\int_0^1 \frac{\pi^{\frac{d}{2}}}{t^{\frac{d}{2}}}e^{-\frac{\pi^2|w|^2}{t}}\frac{\mathrm{d}t}{t}.
 \end{equation}
 Comparing \eqref{Flogdef} and \eqref{Fsdef}, it is not difficult to obtain (cf. \cite{hardin2014periodic}) the relations
 \begin{equation}\label{Flog2}
 \begin{split} F_{\log,\Lambda} (x) &=\lim_{s\to 0}\Gamma(\frac{s}{2})F_{s,\Lambda} (x)=2\left(\frac{d}{ds}F_{s,\Lambda} (x)\right)\bigg|_{s=0}= 2\zeta_\Lambda'(0;x) +\frac{2\pi^{d/2}}{d}|\Lambda |^{-1},
 \end{split}\end{equation}
where the prime  denotes  differentiation with respect to the variable $s$.  We then define the {\em periodic logarithmic energy of $\omega_N=(x_1,\ldots, x_N)$},
\begin{equation}\label{elogdef}
E_{\log,\Lambda} ( \omega_N):=\sum_{\substack{1\leq k,j\leq N \\ k\neq j}}F_{\log,\Lambda} (x_j-x_k),
\end{equation}
 and also the \textit{$N$-point minimal periodic logarithmic energy for $\Lambda$},
\begin{equation}\label{minendeflog}
\mathcal{E}_{\log,\Lambda} (N):=\inf_{\omega_N\in (\R^d)^N}E_{\log,\Lambda} ( \omega_N).\end{equation}

\medskip

For $0<s<d$, the kernel $K_{s,\Lambda} (x,y):=F_{s,\Lambda} (x-y)$ is  positive definite and integrable on $\Omega_\Lambda\times \Omega_\Lambda$ and so there is a unique probability measure  $\mu_s$ (called the {\em Riesz $s$-equilibrium measure}) that minimizes  the continuous Riesz $s$-energy
$$I_{s,\Lambda}(\mu):=\iint_{\Omega_\Lambda \times \Omega_\Lambda} K_{s,\Lambda} (x,y)\mathrm{d}\mu(x)\mathrm{d}\mu(y)$$  over all Borel
probability measures $\mu$ on $\Omega_\Lambda$.   From the periodicity of $F_{s,\Lambda} $ and the uniqueness of the equilibrium measure, it follows that  $\mu_s=\lambda_d$ where $\lambda_d$ denotes  Lebesgue measure restricted to $\Omega_\Lambda$ and normalized so that $\lambda_d(\Omega_\Lambda)=1$; i.e., $\lambda_d$ is the  normalized Haar measure for $\Omega_\Lambda=\R^d/ \Lambda$. The  periodic logarithmic kernel $K_{\log,\Lambda} (x,y):=F_{\log,\Lambda} (x-y)$ is conditionally positive definite and integrable and it similarly follows that $\lambda_d$ is the unique equilibrium measure  minimizing the periodic logarithmic energy
$$I_{\log,\Lambda}(\mu):=\iint_{\Omega_\Lambda \times \Omega_\Lambda} K_{\log,\Lambda} (x,y)\mathrm{d}\mu(x)\mathrm{d}\mu(y)$$  over all Borel
probability measures $\mu$ on $\Omega_\Lambda$.

 lt is not difficult to verify (cf. \cite{hardin2014periodic}) that
\begin{equation}\int_{\Omega_\Lambda}\zeta_{\Lambda}(s;x)\ {\rm d}\lambda_d(x)=0 ,\qquad 0<s<d,\end{equation}
and \begin{equation}\int_{\Omega_\Lambda}\zeta_{\Lambda}'(0;x)\ {\rm d}\lambda_d(x)=0,\end{equation}
from which we obtain
\begin{equation}
I_{s,\Lambda}(\lambda_d)= \frac{2\pi^{\frac{d}{2}}|\Lambda|^{-1}}{\Gamma(\frac{s}{2})(d-s)},  \qquad 0<s<d,
 \end{equation}
 and
 \begin{equation}
I_{\log,\Lambda}(\lambda_d)= \frac{2\pi^{d/2}}{d}|\Lambda|^{-1}.
 \end{equation}
It then follows (cf. \cite{landkof1972foundations}) that
 \begin{equation}
 \lim_{N\rightarrow \infty} \frac{\mathcal{E}_{s,\Lambda} (N)}{N^2}=\frac{2\pi^{\frac{d}{2}}|\Lambda|^{-1}}{\Gamma(\frac{s}{2})(d-s)}, \qquad 0<s<d,
 \end{equation}
 and
 \begin{equation}
 \lim_{N\rightarrow \infty} \frac{\mathcal{E}_{\log,\Lambda} (N)}{N^2}=\frac{2\pi^{d/2}}{d}|\Lambda|^{-1}.
 \end{equation}

\section{Main Results}

Our main result is the following asymptotic expansion of the periodic Riesz and logarithmic minimal energy as $N\to \infty$.
\begin{theorem} \label{thm2} Let $\Lambda$ be a lattice in $\R^d$ with co-volume $|\Lambda|>0$.  Then, as $N\to\infty$,
  \begin{align}\label{Csd0}
  \mathcal{E}_{s,\Lambda} (N)&=\frac{2\pi^{\frac{d}{2}}|\Lambda|^{-1}}{\Gamma(\frac{s}{2})(d-s)}N^2+C_{s,d}|\Lambda|^{-s/d}N^{1+\frac{s}{d}}+o(N^{1+\frac{s}{d}}), \qquad 0<s<d,\\
  \mathcal{E}_{\log,\Lambda} (N)&=\frac{2\pi^{\frac{d}{2}}}{d}|\Lambda|^{-1}N(N-1)-\frac{2}{d}N\log N+\left(C_{\log,d}-2\zeta'_{\Lambda}(0)\right)N+o(N).\label{Clog0}
  \end{align}
where $C_{\log,d}$ and $C_{s,d}$ are constants  independent of $\Lambda$.
\end{theorem}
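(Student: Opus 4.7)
The plan is to establish matching asymptotic upper and lower bounds for $\mathcal{E}_{s,\Lambda}(N)$ in the Riesz case $0<s<d$, and then deduce the logarithmic expansion by differentiating the Riesz asymptotic termwise at $s=0$ via identity \eqref{Flog2}. The structural tool throughout is the Ewald splitting already built into \eqref{Fsdef}: write $F_{s,\Lambda}=F^{\mathrm{sr}}_{s,\Lambda}+F^{\mathrm{lr}}_{s,\Lambda}$, where $F^{\mathrm{sr}}$ (the first sum in \eqref{Fsdef}) carries the $|x|^{-s}$ singularity at the origin and has Gaussian-type decay, while $F^{\mathrm{lr}}$ is a smooth, bounded, $\Lambda$-periodic function.

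For the upper bound, take $N=n^d$ and test with the scaled sublattice configuration $\tfrac{1}{n}\Lambda\cap\Omega_\Lambda$, whose periodic Riesz energy can be evaluated using scaling identities for $\zeta_\Lambda(s;\cdot)$; the factor $|\Lambda|^{-s/d}$ in the next-order coefficient arises automatically from the scaling of near-neighbor distances, and $C_{s,d}$ is identified as the lattice-independent part of that computation. General $N$ is handled by perturbing the $\lfloor N^{1/d}\rfloor^d$-point configuration and controlling the perturbation by Lipschitz bounds for $F_{s,\Lambda}$ away from its singularity.

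For the lower bound, first use that near-minimizers have empirical measures converging weakly to $\lambda_d$ (the unique equilibrium measure), so that $F^{\mathrm{lr}}$ contributes $N^2\iint F^{\mathrm{lr}}_{s,\Lambda}\,d\lambda_d\otimes d\lambda_d+o(N^2)$, which together with the corresponding piece of $F^{\mathrm{sr}}$ recovers the leading $I_{s,\Lambda}(\lambda_d)N^2$ coefficient. The next-order piece is extracted from the short-range part by a localization argument: partition $\Omega_\Lambda$ into $\sim N$ cells of volume $\sim|\Lambda|/N$ and apply a Hardin--Saff type cell-wise Riesz-energy bound; because this bound only sees the local $|x|^{-s}$ asymptotics of $F^{\mathrm{sr}}$ at the origin, the resulting constant $C_{s,d}$ is intrinsically independent of $\Lambda$.

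For the logarithmic case, apply \eqref{Flog2}. Since $F_{0,\Lambda}\equiv 0$, both $F_{\log,\Lambda}$ and $\mathcal{E}_{\log,\Lambda}(N)$ equal twice the $s$-derivative of their Riesz counterparts at $s=0$, provided one has uniform-in-$s$ control of the $o(N^{1+s/d})$ remainder in a neighborhood of $0$. Differentiating termwise, the coefficient $1/[\Gamma(s/2)(d-s)]$ has $s$-derivative $1/(2d)$ at $s=0$, producing the leading $\tfrac{2\pi^{d/2}}{d}|\Lambda|^{-1}N^2$ term (with the $N(N-1)$ bookkeeping); the derivative of $|\Lambda|^{-s/d}N^{1+s/d}$ supplies $-\tfrac{2}{d}N\log N$ together with a $\tfrac{2}{d}N\log|\Lambda|$ correction, which combines with $2\,\partial_s C_{s,d}|_{s=0}$ and $-2\zeta'_\Lambda(0)$ (using the relation between Epstein zeta derivatives at $0$ and $\log|\Lambda|$) to produce the stated constants. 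The main technical obstacle will be the short-range lower bound: the classical Hardin--Saff estimate is stated for the pure Riesz kernel with $s>d$, and adapting it to the present setting ($0<s<d$ and the tempered singular part $F^{\mathrm{sr}}$ rather than $|x|^{-s}$ itself) requires a careful cutoff/localization argument that is uniform in the configuration and sharp to within $o(N^{1+s/d})$, together with uniform-in-$s$ remainder estimates needed to justify differentiation in $s$ for the logarithmic case.
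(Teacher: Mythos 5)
The upper bound via the scaled sublattice configuration $\tfrac1m\Lambda\cap\Omega_\Lambda$ matches the paper (Lemma~\ref{zetafunc} and the start of Lemma~\ref{finlims}), but the rest of your plan diverges in ways that leave real gaps. The central idea you are missing is that the paper never proves a \emph{sharp} lower bound at all. It only needs the crude finiteness bound $\underline{g}_{s,d}(\Lambda)>-\infty$, obtained in Lemma~\ref{finlims} by Poisson-summing the truncated kernel $h_\delta(x)=\int_1^{1/\delta}e^{-|x|^2t}t^{s/2-1}\,dt/\Gamma(s/2)$ and optimizing $\delta\sim N^{-2/d}$. The existence of the limit then comes from two structural facts: (i) the sequence $\mathcal{E}^{\rm cp}_{s,\Lambda}(N)/(N(N-1))$ is \emph{monotone increasing} (a consequence of the conditional positive definiteness of $\zeta_\Lambda(s;\cdot)$, cf.\ Landkof), and (ii) passing to the sublattice $m\Lambda$ gives a subadditivity-type inequality (Lemma~\ref{lem1}, Corollary~\ref{cor1}) which, combined with (i), squeezes $\overline{g}_{s,d}\le\underline{g}_{s,d}$. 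Your proposed route — a cell-localization ``Hardin–Saff type'' lower bound sharp to $o(N^{1+s/d})$ — is the hard direction. For $s>d$ that bound works because the energy is dominated by nearest-neighbor interactions confined to individual cells; for $0<s<d$ the next-order term is a correction to a globally dominant $N^2$ term and cannot be seen by a purely local cell analysis of $F^{\rm sr}$. You flag this yourself as the ``main technical obstacle,'' but without the monotonicity/subadditivity mechanism you have no way around it.

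Two further gaps. First, your claim that $C_{s,d}$ is lattice-independent ``because this bound only sees the local $|x|^{-s}$ asymptotics'' is an assertion, not an argument: the configurations achieving the upper bound are lattice-dependent, and the would-be lower bound involves the full $\Lambda$-periodic kernel. The paper proves lattice-independence by a separate, nontrivial argument: approximate the change-of-lattice matrix $Q=A_2A_1^{-1}$ by rational matrices $Q_m\in\tfrac1m GL(d,\Z)$, apply Lemma~\ref{lem1} to the sublattice $mQ_m\Lambda$, and pass to the limit using continuity of $\zeta_\Lambda(s)$, $\zeta_\Lambda(s;x)$, and $\mathcal{E}_{s,\Lambda}(N)$ in the generating matrix (Lemmas~\ref{lem16}, \ref{lem17}, Corollary~\ref{CorMinEn}). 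Second, obtaining \eqref{Clog0} by differentiating the $N$-asymptotics of \eqref{Csd0} at $s=0$ requires uniform-in-$s$ control of the $o(N^{1+s/d})$ remainder near $s=0$, which you would have to establish and which the paper does not attempt; instead the paper runs the logarithmic case in parallel throughout (using $\lim_{s\to0}\Gamma(s/2)E_{s,\Lambda}=E_{\log,\Lambda}$ at the level of fixed identities and inequalities, e.g.\ in Lemma~\ref{finlims} and Corollary~\ref{cor1}, where only pointwise $s$-differentiation of exact algebraic identities is needed — not of an asymptotic expansion).
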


  Petrache and Serfaty establish in \cite{petrache2014next} a  result closely related to \eqref{Csd0} for  point configurations interacting through a Riesz $s$ potential and confined by an external field for values of the Riesz parameter $d-2\le s\le d$ and  Sandier and Serfaty prove in \cite{sandier20122d} a result closely related to \eqref{Clog0} for the case that $s=\log$ and $d=2$.  \\

 For comparison, when $s\geq d$ it is known that the leading order term of $\mathcal{E}_{s,\Lambda} (N)$ is the same as that of $\mathcal{E}_{s}( \Omega_\Lambda, N):=\mathcal{E}_{f_s}(\Omega_\Lambda, N)$.

 \begin{theorem}[\cite{hardin2005minimal}, \cite{hardin2014periodic}]\label{thm4} Let $\Lambda$ be a lattice in $\R^d$ with co-volume $|\Lambda|>0$. For $s>d$, there is a positive and finite constant $C_{s,d}$ such that
  \begin{align}\label{Csd1}
    \lim_{N\rightarrow \infty} \frac{\mathcal{E}_{s,\Lambda} (N)}{N^{1+s/d}}&=\lim_{N\rightarrow \infty} \frac{\mathcal{E}_{s}(\Omega_\Lambda; N)}{N^{1+s/d}}= {C_{s,d}}{|\Lambda|^{-s/d}}, \qquad s>d,\\
    \lim_{N\rightarrow \infty} \frac{\mathcal{E}_{d,\Lambda} (N)}{N^2\log N}&=\lim_{N\rightarrow \infty} \frac{\mathcal{E}_d(\Omega_\Lambda,N)}{N^2 \log N}=\frac{2\pi^{d/2}}{d\Gamma(\frac{d}{2})}|\Lambda|^{-1},\label{Cdd1}.
  \end{align}
\end{theorem}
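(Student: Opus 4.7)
The plan splits into two stages: (A) reduce the periodic energy problem to the Euclidean short-range one on $\Omega_\Lambda$, and (B) establish the universal asymptotic for $\mathcal{E}_s(\Omega_\Lambda,N)$ by a scale-and-tile argument.

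For stage (A), I would exploit the fact that, for $s>d$, the series $\sum_{v\in\Lambda\setminus\{0\}}|x+v|^{-s}$ converges uniformly on $\Omega_\Lambda$. From \eqref{FsLam} this gives
\[
F_{s,\Lambda}(x)-f_s(x)=\sum_{v\in\Lambda\setminus\{0\}}\frac{1}{|x+v|^s}+\frac{2\pi^{d/2}|\Lambda|^{-1}}{\Gamma(s/2)(d-s)},
\]
which is bounded on $\Omega_\Lambda\setminus\{0\}$ by some $M_s<\infty$. Consequently, for any $\omega_N\in\Omega_\Lambda^N$,
\[
|E_{s,\Lambda}(\omega_N)-E_s(\omega_N)|\le M_s\,N(N-1),
\]
so $|\mathcal{E}_{s,\Lambda}(N)-\mathcal{E}_s(\Omega_\Lambda,N)|=O(N^2)$. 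Since $1+s/d>2$ when $s>d$, this remainder is negligible. For the critical case $s=d$ an analogous estimate, isolating the $v=0$ term in \eqref{Fsdef} and controlling the rest near the origin, shows that $F_{d,\Lambda}(x)-|x|^{-d}$ is again bounded on $\Omega_\Lambda\setminus\{0\}$, so the $O(N^2)$ error is dominated by $N^2\log N$. Thus it suffices to prove \eqref{Csd1}--\eqref{Cdd1} with $\mathcal{E}_s(\Omega_\Lambda,N)$ in place of $\mathcal{E}_{s,\Lambda}(N)$.

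For stage (B), I would use the Hardin--Saff \emph{poppy-seed bagel} strategy. The ingredients are: (i) the scaling identity $\mathcal{E}_s(\alpha A,N)=\alpha^{-s}\mathcal{E}_s(A,N)$; (ii) uniform separation of near-optimal configurations at scale $N^{-1/d}$, which follows because any two points at distance $r$ contribute at least $r^{-s}$ to the energy and this must match the known order $N^{1+s/d}$; (iii) a subadditive upper bound obtained by partitioning $\Omega_\Lambda$ into $m^d$ congruent subtiles and distributing $N$ points proportionally to area using a near-minimizer in each subtile, and a matching superadditive lower bound obtained by grouping the points of a near-minimizer on $\Omega_\Lambda$ according to the subtiles they occupy, with cross-tile interactions controlled by the separation estimate and the integrability of $f_s$ away from the origin. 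These steps yield existence of
\[
C_{s,d}:=\lim_{N\to\infty}\frac{\mathcal{E}_s([0,1]^d,N)}{N^{1+s/d}}\in(0,\infty),
\]
independent of tile geometry; rescaling to a unit-volume domain produces the factor $|\Lambda|^{-s/d}$ in \eqref{Csd1}. For $s=d$, $N^{1+s/d}$ is replaced by $N^2\log N$, and the constant is accessible in closed form because well-separation localizes contributions to pair distances $r\in[cN^{-1/d},{\rm diam}\,\Omega_\Lambda]$; integrating $|x|^{-d}$ in polar coordinates over this annulus yields the logarithmic $N^2\log N$ asymptotic, and averaging against the uniform pair density on $\Omega_\Lambda$ (of total mass $|\Lambda|^{-1}$) together with the sphere area $2\pi^{d/2}/\Gamma(d/2)$ gives exactly the constant $\frac{2\pi^{d/2}}{d\Gamma(d/2)}|\Lambda|^{-1}$ in \eqref{Cdd1}.

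The principal obstacle is the superadditive half of step (iii): one must show that no near-minimizer can do better than distributing itself evenly across a fine tiling. This is precisely where $s\ge d$ is used, for then the kernel is short-range relative to Lebesgue measure, so the tile-wise decomposition loses only a vanishing fraction of the total energy in the limit. The universality of $C_{s,d}$ then follows because the tiling argument compares an arbitrary $\Omega_\Lambda$ to a cube of equal volume, and no geometric information beyond $|\Lambda|$ enters.
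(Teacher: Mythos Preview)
The paper does not give its own proof of Theorem~\ref{thm4}; the result is quoted from \cite{hardin2005minimal} and \cite{hardin2014periodic}, so there is no in-paper argument to compare against. Your two-stage outline is in fact the strategy of those cited works, and Stage~(B) is a fair summary of the Hardin--Saff poppy-seed argument.

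Stage~(A), however, contains a genuine error. You claim that $\sum_{v\in\Lambda\setminus\{0\}}|x+v|^{-s}$ is bounded on $\Omega_\Lambda\setminus\{0\}$, but this fails for the parallelepiped fundamental domain: with $x=(1-\epsilon)v_1\in\Omega_\Lambda$ one has $|x-v_1|^{-s}=(\epsilon|v_1|)^{-s}\to\infty$ as $\epsilon\to 0^+$. More to the point, the differences $x_j-x_k$ entering $E_s(\omega_N)$ range over $\Omega_\Lambda-\Omega_\Lambda$, whose closure contains nonzero lattice points, so no uniform two-sided estimate $|E_{s,\Lambda}(\omega_N)-E_s(\omega_N)|\le M_s N(N-1)$ is available as written.

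The repair is standard and does not change your overall plan. One direction is free: since the tail sum is nonnegative, $F_{s,\Lambda}(x)\ge |x|^{-s}+c$, giving $\mathcal{E}_{s,\Lambda}(N)\ge \mathcal{E}_s(\Omega_\Lambda,N)+O(N^2)$. For the reverse inequality, rescale a near-minimizer of $E_s$ on $\Omega_\Lambda$ into $(1-\epsilon)\Omega_\Lambda$; then all pairwise differences lie in a compact set bounded away from $\Lambda\setminus\{0\}$, the tail sum is bounded by some $M_\epsilon$, and one obtains $\mathcal{E}_{s,\Lambda}(N)\le (1-\epsilon)^{-s}\mathcal{E}_s(\Omega_\Lambda,N)+O_\epsilon(N^2)$; divide by $N^{1+s/d}$ and let $\epsilon\to 0$. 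Equivalently, run Stage~(A) with the Voronoi cell of $0$ in place of $\Omega_\Lambda$: there your boundedness claim is actually correct, since $0$ lies in the interior of the Voronoi cell and hence at positive distance from every nonzero translate. The same modification handles $s=d$.
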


By considering scaled lattice configurations (see Lemma~\ref{zetafunc}) of the form $\omega_{m^d}^{\Lambda}:=(1/m)\Lambda \cap \Omega_{\Lambda}$ for a lattice $\Lambda$
of co-volume 1, we obtain the following upper bound for $C_{s,d}$ that holds both for  $0<s<d$ and $s=\log$  where $C_{s,d}$ is as in Theorem~\ref{thm2}  as well as for  $s> d$ where $C_{s,d}$ is as in Theorem~\ref{thm2}.
\begin{corollary}
Let $\Lambda$ be a $d$-dimensional lattice with co-volume 1. Then,
\begin{equation}
C_{s,d}\le \begin{cases} \zeta_{\Lambda}(s), &s>0, s\neq d,\\ 2\zeta_{\Lambda}'(0), & s=\log.\end{cases}
\end{equation}
\end{corollary}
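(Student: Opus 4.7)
The plan is to use the scaled lattice $\omega_{m^d}^\Lambda := (1/m)\Lambda \cap \Omega_\Lambda$ as a trial configuration, evaluate $E_{s,\Lambda}(\omega_{m^d}^\Lambda)$ in closed form, and then compare with the asymptotic expansions of Theorems~\ref{thm2} and \ref{thm4} along the subsequence $N = m^d \to \infty$. Since $\mathcal{E}_{s,\Lambda}(N) \le E_{s,\Lambda}(\omega_{m^d}^\Lambda)$, any such closed-form evaluation yields an upper bound on $C_{s,d}$.

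For the closed-form computation, I would exploit the $\Lambda$-periodicity of $F_{s,\Lambda}$: for each fixed $x_j \in \omega_{m^d}^\Lambda$, as $k$ ranges over $\{1,\ldots,N\}\setminus\{j\}$ the differences $x_k - x_j$ are in bijection, modulo $\Lambda$, with the nonzero elements of $(1/m)\Lambda \cap \Omega_\Lambda$.  Hence
\begin{equation*}
E_{s,\Lambda}(\omega_{m^d}^\Lambda) = m^d\!\sum_{y \in (1/m)\Lambda \cap \Omega_\Lambda \setminus \{0\}}\! F_{s,\Lambda}(y).
\end{equation*}
Writing $F_{s,\Lambda}(y) = \zeta_\Lambda(s;y) + \tfrac{2\pi^{d/2}|\Lambda|^{-1}}{\Gamma(s/2)(d-s)}$ via \eqref{FsLam} and unfolding $\sum_y \sum_{v\in\Lambda}|y+v|^{-s}$ over the disjoint union $(1/m)\Lambda = \bigcup_y (y+\Lambda)$, the inner sum collapses to $\zeta_{(1/m)\Lambda}(s) - \zeta_\Lambda(s) = (m^s-1)\zeta_\Lambda(s)$ by the homogeneity $\zeta_{a\Lambda}(s) = a^{-s}\zeta_\Lambda(s)$.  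This is precisely the content I expect from Lemma~\ref{zetafunc}.

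Substituting $|\Lambda|=1$ and $N=m^d$ yields, for $s>0$ with $s\ne d$,
\begin{equation*}
E_{s,\Lambda}(\omega_{m^d}^\Lambda) = N(N-1)\tfrac{2\pi^{d/2}}{\Gamma(s/2)(d-s)} + N^{1+s/d}\zeta_\Lambda(s) - N\zeta_\Lambda(s).
\end{equation*}
When $0<s<d$, the quadratic term matches the leading term in \eqref{Csd0} exactly, so after cancellation, division by $N^{1+s/d}$, and sending $m\to\infty$, the lower-order $O(N)$ contributions vanish and I conclude $C_{s,d} \le \zeta_\Lambda(s)$.  When $s>d$, the same identity holds, but now $1+s/d>2$ so the $O(N^2)$ contribution is absorbed into $o(N^{1+s/d})$ and, comparing with Theorem~\ref{thm4}, the same bound $C_{s,d} \le \zeta_\Lambda(s)$ follows.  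For the logarithmic case I would differentiate $\sum_y \zeta_\Lambda(s;y) = (m^s-1)\zeta_\Lambda(s)$ at $s=0$ to get $\sum_y \zeta'_\Lambda(0;y) = (\log m)\,\zeta_\Lambda(0) = -\log m$, using the classical value $\zeta_\Lambda(0)=-1$ for the Epstein zeta function.  Then \eqref{Flog2} gives $E_{\log,\Lambda}(\omega_{m^d}^\Lambda) = -\tfrac{2}{d}N\log N + \tfrac{2\pi^{d/2}}{d}N(N-1)$, and comparison with \eqref{Clog0} yields $C_{\log,d} \le 2\zeta'_\Lambda(0)$.

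The main technical subtlety is the unfolding $\sum_y \zeta_\Lambda(s;y) = (m^s - 1)\zeta_\Lambda(s)$ when $s\le d$: the underlying double series converges absolutely only for $s>d$, so the identity must first be established in that half-plane by a direct rearrangement and then extended to $s\in\C\setminus\{d\}$ via the analytic continuation built into \eqref{Fsdef}.  The auxiliary input $\zeta_\Lambda(0)=-1$ is a standard consequence of the Mellin--theta functional equation for $\zeta_\Lambda$ and can be cited from the literature.
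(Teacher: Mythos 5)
Your proposal is correct and follows essentially the same route as the paper: the paper also evaluates the energy of the scaled lattice configuration $\omega_{m^d}^\Lambda=(1/m)\Lambda\cap\Omega_\Lambda$ via Lemma~\ref{zetafunc} (which is exactly your unfolding identity $\sum_y\zeta_\Lambda(s;y)=(m^s-1)\zeta_\Lambda(s)$, proved for $s>d$ by rearrangement and then extended by analyticity), differentiates at $s=0$ with $\zeta_\Lambda(0)=-1$ for the logarithmic case, and compares with the asymptotic expansions. The only cosmetic difference is that the paper obtains the bound during the proof of Lemma~\ref{finlims} as $\overline{g}_{s,d}(\Lambda)\le\zeta_\Lambda(s)$ and $\overline{g}_{\log,d}(\Lambda)\le 0$, using the Landkof monotonicity of $\mathcal{E}_{s,\Lambda}^{\rm cp}(N)/(N(N-1))$ to pass from the subsequence $N=m^d$ to all $N$, whereas you invoke Theorems~\ref{thm2} and \ref{thm4} as black boxes, which already guarantee the full limit exists so the subsequence suffices.
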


The constant $C_{s,d}$ for $s>d$ appearing in \eqref{Csd1} is known only in the case $d=1$  where  $C_{s,1}=\zeta_{\Z}(s)=2\zeta(s)$ and $\zeta(s)$ denotes the classical Riemann zeta function.
  For dimensions $d=2, 4, 8$, and $24$,  it has been conjectured (cf. \cite{cohn2007universally, brauchart2012next} and references therein) that $C_{s,d}$ for $s>d$ is also given by an Epstein zeta function, specifically, that  $C_{s,d}=\zeta_{\Lambda_d}(s)$ for $\Lambda_d$ denoting the  equilateral triangular (or hexagonal) lattice, the $D_4$ lattice, the $E_8$ lattice, and the Leech lattice (all scaled to have co-volume 1) in the dimensions $d=2, 4, 8,$ and 24, respectively.  In \cite{coulangeon2011energy}, it is shown that periodized lattice configurations   for these special lattices are {\em local} minima  of the energy for a large class of energy potentials that includes periodic Riesz $s$-energy potentials for $s>d$.  

Finally, we would like to say a little more about periodizing  Riesz potentials.     It is elementary to verify that
$$|x+v|^{-s}-|v|^{-s}=O\left(|v|^{-(s+1)}\right), \qquad (|v|\to \infty). $$
Let $C$ be a bounded set containing an open neighborhood of the origin and for $L>0$ let $C_L=LC$ (the set $C$ scaled by $L$).
Then,  we have
\begin{equation}\label{zetadiff}
\zeta(s;x)-\zeta(s)=|x|^{-s}+(1/2)\lim_{L\to \infty}\sum_{v\in  C_L\setminus\{0\}}|x+v|^{-s}-|v|^{-s}, \qquad (x\not\in \Lambda),
\end{equation}
for $s>d-1$.
Further assuming that $C$ is centrally symmetric (i.e., $v\in C\implies -v\in C$) and using the fact (see  \eqref{x+vx-v} Section~\ref{sect5}) that
\begin{equation*}
|x+v|^{-s}+|x-v|^{-s}-2|v|^{-s}=O\left(|v|^{-(s+2)}\right), \qquad (|v|\to \infty),
\end{equation*}
shows that \eqref{zetadiff} holds
for $s>d-2$ and so we obtain (up to a constant depending only on $s$) the same periodic Riesz potential   given in \eqref{FsLam}; i.e., the convergence factors procedure and the limit in \eqref{zetadiff} give the same result.  However, \eqref{zetadiff} breaks down for $0<s<d-2$.  In this case the energy is dominated by long range contributions  from translates near the boundary of $C$ and  \eqref{zetadiff} no longer holds.  In fact, as a consequence of Theorem~\ref{shellthm} below, the right hand side of \eqref{zetadiff} is $-\infty$ for all $x$.  For the case that $C=\mathcal{B}_d$, the unit ball in $\R^d$ centered at the origin, we find that \eqref{zetadiff} can be `renormalized' by dividing by
\begin{equation}\label{DLdef}
D_L:=\sum_{v\in  C_L\setminus\{0\}}|v|^{-s},
\end{equation} but  this leads to a non-periodic potential.

\begin{theorem}\label{shellthm}
Let $C=\mathcal{B}_d$, $0<s\le d-2$, and for $L>0$, let $D_L$ be given by \eqref{DLdef}.
Then
\begin{equation} \label{limslessd-2}
\lim_{L\to \infty}D_L^{-1}\left[ |x|^{-s}+(1/2)\sum_{v\in  C_L\setminus\{0\}}|x+v|^{-s}-|v|^{-s}\right]= -(s/d)(d-s-2)|x|^2,
\end{equation}
uniformly in $x$ on compact subsets of $ \R^d\setminus\Lambda$.
\end{theorem}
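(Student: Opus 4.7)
The plan is to symmetrize, Taylor--expand, and then normalize against $D_L$. Since $C_L = L\mathcal{B}_d$ is centrally symmetric, pairing $v$ with $-v$ gives
\[
\tfrac12\sum_{v\in C_L\setminus\{0\}}(|x+v|^{-s}-|v|^{-s}) = \tfrac14\sum_{v\in C_L\setminus\{0\}}(|x+v|^{-s}+|x-v|^{-s}-2|v|^{-s}),
\]
improving the per-term size from $O(|v|^{-s-1})$ to $O(|v|^{-s-2})$, matching the symmetrized estimate already cited just above the theorem. This is crucial because for $0<s\le d-2$ the un-symmetrized tail fails to be absolutely summable.

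Next I would expand $|x\pm v|^{-s}=|v|^{-s}(1\pm 2(x\cdot v)/|v|^2+|x|^2/|v|^2)^{-s/2}$ via the binomial series and add the two signs, producing
\[
|x+v|^{-s}+|x-v|^{-s}-2|v|^{-s} = -s|x|^2|v|^{-s-2}+s(s+2)(x\cdot v)^2|v|^{-s-4}+O(|x|^4|v|^{-s-4}),
\]
uniformly for $|v|\ge 2|x|$. Summing over $v\in C_L\cap\Lambda$ and replacing the directional sum by its isotropic average, $\sum_v(x\cdot v)^2|v|^{-s-4}\sim(|x|^2/d)\sum_v|v|^{-s-2}$ (exact for lattices invariant under coordinate permutations and sign changes, asymptotically valid in general via continuum rotational symmetry), the two quadratic contributions combine through $-s+s(s+2)/d=-(s/d)(d-s-2)$ into one term proportional to $|x|^2\sum_{v\in C_L\setminus\{0\}}|v|^{-s-2}$. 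The leftover $|x|^{-s}$ and the contribution from $|v|\lesssim|x|$, where the Taylor expansion is invalid, are each bounded by a constant that is continuous in $x$ on compact subsets of $\mathbb{R}^d\setminus\Lambda$, and hence both become negligible after dividing by $D_L\to\infty$. Uniform convergence on compacta will then follow from uniformity of the Taylor remainder once $|x|$ is bounded and $x$ is kept a positive distance from $\Lambda$.

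The decisive step is then the comparison of $\sum_{v\in C_L\setminus\{0\}}|v|^{-s-2}$ with $D_L$, and this is the main obstacle. Standard Euler--Maclaurin asymptotics give $D_L\sim\omega_{d-1}L^{d-s}/[(d-s)|\Lambda|]$ and, for $s<d-2$, $\sum|v|^{-s-2}\sim\omega_{d-1}L^{d-s-2}/[(d-s-2)|\Lambda|]$ (logarithmic in $L$ at $s=d-2$), so their naive ratio vanishes like $L^{-2}$ rather than tending to the claimed nonzero constant. Bridging this gap will require going beyond the leading continuum comparison---for instance a Poisson-summation analysis of the sharp cutoff $\chi_{L\mathcal{B}_d}$ whose dual-lattice contributions involve Bessel-type oscillatory kernels, or an Ewald-type splitting based on the representation $r^{-s}=\Gamma(s/2)^{-1}\int_0^\infty e^{-r^2 t}\,t^{s/2-1}\,dt$ used elsewhere in this paper, which separates the cancellation between $\sum|x+v|^{-s}$ and $\sum|v|^{-s}$ into short- and long-range pieces that can be tracked to the needed subleading order. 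The boundary case $s=d-2$, where the stated right-hand side vanishes, needs its own treatment of the logarithmic divergence.
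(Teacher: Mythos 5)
Your calculational plan --- symmetrize over $\pm v$, Taylor expand to second order in $|v|^{-1}$, then average the resulting quadratic form $-s|x|^2+s(s+2)\langle x, v/|v|\rangle^2$ over directions --- is exactly the paper's argument. The paper packages the directional averaging into Lemma~\ref{weaksphere}, which proves weak convergence to uniform surface measure of the angular measures $\mu_L$ weighted by $|v|^{-\sigma}$ (applied with $\sigma=s+2$, which is why the theorem's hypothesis is $s\le d-2$); your appeal to ``continuum rotational symmetry'' is the same idea, asserted rather than proved. Your algebra $-s+s(s+2)/d=-(s/d)(d-s-2)$ also matches the paper's.

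What you flagged as the ``decisive step'' and ``main obstacle'' is not an obstacle in your argument: it is an inconsistency in the paper's own statement. Look at the last displayed computation in the paper's proof of Theorem~\ref{shellthm}: the denominator there is $\sum_{v\in C_L\setminus\{0\}}|v|^{-(s+2)}$, not the $D_L=\sum_{v\in C_L\setminus\{0\}}|v|^{-s}$ of \eqref{DLdef}. In other words, the normalizer actually used in the proof is $D_L$ with $s$ replaced by $s+2$ --- precisely the quantity your expansion isolates as the natural denominator. Your Euler--Maclaurin asymptotics $D_L\sim\omega_{d-1}L^{d-s}/(d-s)$ versus $\sum|v|^{-(s+2)}\sim\omega_{d-1}L^{d-s-2}/(d-s-2)$ (or $\sim\omega_{d-1}\log L$ at $s=d-2$) are correct and show unambiguously that with the printed $D_L$ the ratio tends to $0$, not to the stated constant. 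You computed the right answer and then lost confidence in it.

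Consequently, your speculation that a Poisson-summation analysis of the sharp cutoff $\chi_{L\mathcal{B}_d}$ or an Ewald splitting is needed ``to bridge the gap'' is a red herring: there is no subleading cancellation to extract, because the left-hand side with the printed $D_L$ genuinely vanishes. The correct move was to conclude that the theorem cannot hold as printed and that the intended normalizer is $\sum_{v\in C_L\setminus\{0\}}|v|^{-(s+2)}$, after which your computation (together with a precise version of the isotropic-averaging step, in the spirit of Lemma~\ref{weaksphere}, and the observation that the contribution from $|v|\lesssim|x|$ and the unsymmetrized remainder are $o(\sum|v|^{-(s+2)})$ uniformly on compacta) finishes the proof. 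Your instinct to give the endpoint $s=d-2$ separate attention is sound, since the relevant sum then grows logarithmically, but this too is handled once the correct normalizer is in place, because Lemma~\ref{weaksphere} already covers exponents up to $d$.
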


\noindent
{\bf Remarks:}
\begin{itemize}
\item Theorem~\ref{shellthm} implies that for an $N$-point configuration $\{x_1,\ldots, x_n\}$ the energy sum
\begin{equation}\label{L2infinity}
\begin{split}
\sum_{i\neq j} |x_i-x_j|^{-s}+\frac{1}{2}\sum_{v\in  C_L\setminus\{0\}}&(|x_i-x_j+v|^{-s}-|v|^{-s})\\ & =  -(s/d)(d-s-2)D_L\sum_{i\neq j} |x_i-x_j|^2+o(D_L).
\end{split}
\end{equation}
as $L\to \infty$.
\item Since the first term on the right side of \eqref{L2infinity} vanishes
if  $s=d-2$,  the dominant term of the left hand side of \eqref{L2infinity} is not determined by \eqref{limslessd-2} when $s=d-2$; i.e., all we know is that this term is $o(D_L)$.
\end{itemize}




\section{The Epstein Hurwitz Zeta Function}\label{ehzf}

In this section, we will review some relevant terminology and notation involving  special functions that will be crucial for our analysis in Section \ref{prf2}.

An argument  utilizing the integral representation of the Riesz kernel given in \eqref{eq2} together with the Poisson Summation Formula    can be used to establish the following lemma (cf., \cite[Section 1.4]{terras1985harmonic}).
\begin{lemma}\label{lemEpZeta1}
  The Epstein zeta function $\zeta_{\Lambda}(s)$ can be analytically continued to $\mathbb{C}\setminus\{d\}$ through the following formula :
  \begin{align*}
    \zeta_\Lambda(s)=\frac{2}{\Gamma(\frac{s}{2})}\left(\frac{2\pi^{\frac{d}{2}}|\Lambda|^{-1}}{s-d}-\frac{1}{s}\right)+\sum_{v\in\Lambda\setminus\{0\}}\int_1^\infty e^{-|v|^2t}\frac{t^{\frac{s}{2}-1}}{\Gamma(\frac{s}{2})}\mathrm{d}t+\frac{1}{|\Lambda|}\sum_{w\in\Lambda^*\setminus\{0\}}\int_0^1 \frac{\pi^{\frac{d}{2}}}{t^\frac{d}{2}}e^{-\frac{\pi^2|w|^2}{t}}\frac{t^{\frac{s}{2}-1}}{\Gamma(\frac{s}{2})}\mathrm{d}t.
  \end{align*}
\end{lemma}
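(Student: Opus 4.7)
The plan is to adapt Riemann's classical derivation of the analytic continuation of $\zeta(s)$ to the lattice setting. I begin in the region of absolute convergence $\operatorname{Re}(s)>d$, where $\zeta_{\Lambda}(s)=\sum_{v\in\Lambda\setminus\{0\}}|v|^{-s}$. Applying the Gamma-integral representation \eqref{eq2} to each term and interchanging sum and integral (justified by absolute convergence), I split the range at $t=1$:
\[
\zeta_{\Lambda}(s)\;=\;\sum_{v\in\Lambda\setminus\{0\}}\int_0^1 e^{-|v|^2 t}\frac{t^{\frac{s}{2}-1}}{\Gamma(\frac{s}{2})}\,\mathrm{d}t\;+\;\sum_{v\in\Lambda\setminus\{0\}}\int_1^\infty e^{-|v|^2 t}\frac{t^{\frac{s}{2}-1}}{\Gamma(\frac{s}{2})}\,\mathrm{d}t.
\]

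The $\int_1^\infty$ piece will appear unchanged in the final formula; it is entire in $s$ because on $t\ge 1$ the Gaussian $e^{-|v|^2 t}$ gives super-polynomial decay in $|v|$, uniformly on any bounded $s$-region, so a routine dominated-convergence argument (together with Morera) yields holomorphy throughout $\mathbb{C}$.

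The real work is in the $\int_0^1$ piece. Here I invoke the theta-function inversion obtained by applying the Poisson summation formula to the Schwartz function $x\mapsto e^{-t|x|^2}$, whose $\mathbb{R}^d$-Fourier transform is $(\pi/t)^{d/2}e^{-\pi^2|\cdot|^2/t}$. Separating the $v=0$ and $w=0$ terms yields
\[
\sum_{v\in\Lambda\setminus\{0\}} e^{-|v|^2 t}\;=\;\frac{\pi^{d/2}}{|\Lambda|\,t^{d/2}}\;-\;1\;+\;\frac{\pi^{d/2}}{|\Lambda|\,t^{d/2}}\sum_{w\in\Lambda^*\setminus\{0\}} e^{-\pi^2|w|^2/t}.
\]
Substituting and carrying out the two elementary Mellin integrals on $[0,1]$ produces the explicit rational-in-$s$ constant term of the claimed formula (from $\int_0^1 t^{(s-d)/2-1}\,\mathrm{d}t=2/(s-d)$ and $\int_0^1 t^{s/2-1}\,\mathrm{d}t=2/s$, both valid a priori for $\operatorname{Re}(s)>d$), together with the dual-lattice double integral appearing on the right-hand side of the lemma. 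This dual double integral is itself entire in $s$: the factor $e^{-\pi^2|w|^2/t}$ gives overwhelming decay both as $t\to 0^+$ (killing the $t^{-d/2}$ singularity for every $s$) and as $|w|\to\infty$ (ensuring the $w$-sum converges uniformly on compact subsets of $\mathbb{C}$).

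All three constituent pieces therefore extend meromorphically to $\mathbb{C}$, with the only surviving singularity being the simple pole at $s=d$ inherited from $1/(s-d)$; the would-be pole at $s=0$ from $1/s$ is killed by the simple zero of $1/\Gamma(s/2)$ at $s=0$, and $1/\Gamma(s/2)$ is nonvanishing at $s=d$. By the identity principle, the equality obtained for $\operatorname{Re}(s)>d$ provides the desired analytic continuation of $\zeta_{\Lambda}(s)$ to $\mathbb{C}\setminus\{d\}$. The technical point that needs the most care is legitimizing (i) the termwise Gaussian Poisson summation at a fixed $t>0$, and (ii) the interchange of the Mellin integral with the lattice sum in the original series; both are standard once one commits to doing the interchange only in the absolute-convergence strip $\operatorname{Re}(s)>d$ and appealing to Fubini, after which the resulting identity is propagated to all of $\mathbb{C}\setminus\{d\}$ by analytic continuation.
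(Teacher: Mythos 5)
Your argument is the classical Riemann theta-function/Mellin-transform proof, which is exactly what the paper is gesturing at when it cites Terras: apply the Gamma-integral representation to each term of $\zeta_\Lambda(s)$ for $\operatorname{Re}(s)>d$, split the $t$-integral at $t=1$, keep the $\int_1^\infty$ piece as is (entire by dominated convergence plus Morera), apply Poisson summation on the $\int_0^1$ piece to convert to the dual lattice, separate the $v=0$ and $w=0$ contributions, and evaluate the two elementary Mellin integrals. This is the same route the paper intends, and the structure of your argument is sound.

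One caution, though: if you actually carry the two Mellin integrals through, you do \emph{not} get the constant term as printed in the lemma. The coefficient multiplying $\int_0^1 t^{(s-d)/2-1}\,\mathrm{d}t = \frac{2}{s-d}$ is $\frac{\pi^{d/2}|\Lambda|^{-1}}{\Gamma(s/2)}$, which yields $\frac{2\pi^{d/2}|\Lambda|^{-1}}{\Gamma(s/2)(s-d)}$, and together with $-\frac{2}{\Gamma(s/2)\,s}$ this gives
\[
\frac{2}{\Gamma(\frac{s}{2})}\left(\frac{\pi^{\frac{d}{2}}|\Lambda|^{-1}}{s-d}-\frac{1}{s}\right),
\]
not $\frac{2}{\Gamma(\frac{s}{2})}\left(\frac{2\pi^{\frac{d}{2}}|\Lambda|^{-1}}{s-d}-\frac{1}{s}\right)$. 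The extra inner factor of $2$ in the lemma's display appears to be a typographical slip in the paper: it is inconsistent with \eqref{FsLam}, which forces $\zeta_\Lambda(\cdot\,;x)$ (and hence $\zeta_\Lambda(\cdot)$) to have residue $\frac{2\pi^{d/2}|\Lambda|^{-1}}{\Gamma(d/2)}$ at $s=d$, whereas the printed constant would give twice that. So your assertion that the Mellin integrals ``produce the explicit \ldots constant term of the claimed formula'' is overclaimed — you should either write out the resulting constant explicitly (which differs from the printed one) or flag the discrepancy. Everything else, including the holomorphy of the $v$-sum over $[1,\infty)$ and of the dual $w$-sum over $[0,1]$ and the cancellation of the $1/s$ pole by $1/\Gamma(s/2)$, is correct.
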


\noindent
{\bf Remark.}
  From $\lim\limits_{s\to 0+}\Gamma(\frac{s}{2})=\infty$ and $\lim\limits_{s\to 0+}\frac{s}{2}\Gamma(\frac{s}{2})=\lim\limits_{s\to 0+}\Gamma(\frac{s}{2}+1)=\Gamma(1)=1$, it follows that $\zeta_\Lambda(0)=-1$ and $\zeta_{\Lambda}(0;x)\equiv0$ for any lattice $\Lambda$.\\

Next we establish the following relation between the Epstein and Epstein Hurwitz zeta functions.
\begin{lemma}\label{zetaidents}
  Let $\Lambda$ be a sublattice of $\Lambda'$. Then for any $s\in\C\setminus\{d\}$, it holds that
  \begin{equation}\label{EpHur1}
    \sum_{x \in \Lambda'\cap\Omega_\Lambda\setminus\{0\}}\zeta_\Lambda(s;x)=\zeta_{\Lambda'}(s)-\zeta_{\Lambda}(s).
  \end{equation}
\end{lemma}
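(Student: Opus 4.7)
The plan is a two-step argument: verify the identity directly in the half-plane $\operatorname{Re}(s) > d$, where both sides are given by absolutely convergent series, and then extend to all $s \in \C \setminus \{d\}$ by analytic continuation.

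For the first step, the key observation is that $\Lambda' \cap \Omega_\Lambda$ is a complete set of coset representatives for $\Lambda'/\Lambda$, which gives the disjoint union
\[
\Lambda' = \bigsqcup_{x \in \Lambda' \cap \Omega_\Lambda} (x + \Lambda).
\]
The coset indexed by $x = 0$ is precisely $\Lambda$, and since $\Lambda \cap \Omega_\Lambda = \{0\}$, every nonzero $x \in \Lambda' \cap \Omega_\Lambda$ lies outside $\Lambda$, so the defining series \eqref{Fscp} for $\zeta_\Lambda(s;x)$ converges. For $\operatorname{Re}(s) > d$, absolute convergence lets me interchange the finite outer sum with the inner lattice sums and collapse them to a single sum over $\Lambda' \setminus \Lambda$:
\[
\sum_{x \in \Lambda' \cap \Omega_\Lambda \setminus \{0\}} \zeta_\Lambda(s;x) = \sum_{x}\sum_{v \in \Lambda} \frac{1}{|x+v|^s} = \sum_{u \in \Lambda' \setminus \Lambda} \frac{1}{|u|^s} = \zeta_{\Lambda'}(s) - \zeta_\Lambda(s).
\]

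For the second step, I would invoke equations \eqref{Fsdef} and \eqref{FsLam} together with Lemma \ref{lemEpZeta1} to conclude that each of $\zeta_\Lambda(s;x)$, $\zeta_\Lambda(s)$, and $\zeta_{\Lambda'}(s)$ is meromorphic on $\C$ with at most a simple pole at $s = d$ (for $\zeta_\Lambda(s;x)$, this uses that $F_{s,\Lambda}(x)$ is entire in $s$ for $x \notin \Lambda$ and that the correction term $\frac{2\pi^{d/2}|\Lambda|^{-1}}{\Gamma(s/2)(d-s)}$ has only a simple pole at $s = d$, since the zero of $1/\Gamma(s/2)$ cancels the apparent singularities at $s = 0, -2, \ldots$). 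Since the left-hand side of \eqref{EpHur1} is a finite sum of such functions, both sides define meromorphic functions on $\C \setminus \{d\}$. They agree on the open set $\{\operatorname{Re}(s) > d\}$, so by the identity theorem applied to the connected domain $\C \setminus \{d\}$ they coincide throughout.

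I do not anticipate a serious obstacle here: the coset decomposition is elementary once $\Omega_\Lambda$ is used as a transversal, and the meromorphic extensions of each side are already in place. The only point requiring care is ensuring the pole contributions at $s = d$ on the two sides are consistent so that nothing formal blows up — but since the claim is stated only on $\C \setminus \{d\}$, this is automatic. (As a sanity check, the residues match: the $[\Lambda':\Lambda] - 1 = |\Lambda|/|\Lambda'| - 1$ terms on the left each contribute the same residue proportional to $|\Lambda|^{-1}$, summing to a quantity proportional to $|\Lambda'|^{-1} - |\Lambda|^{-1}$, consistent with the right-hand side.)
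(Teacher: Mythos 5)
Your proof is correct and takes essentially the same route as the paper's: verify the coset decomposition identity in the absolutely convergent half-plane $\operatorname{Re}(s) > d$, then extend to $\C \setminus \{d\}$ by analyticity of both sides. The paper states the analytic-continuation step in one sentence and leaves it to the reader, whereas you spell out why each side is meromorphic with at most a simple pole at $s=d$ and add a residue sanity check; these are worthwhile elaborations, but there is no substantive difference in method.
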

\begin{proof}
  It is sufficient to prove that \eqref{EpHur1} holds for $s>d$, since the general result follows from the fact that both sides of this relation are analytic on $\C\setminus\{d\}$.   For $s>d$, we have by definition
  \begin{align*}
    \zeta_{\Lambda'}(s)&=\sum_{x\in\Lambda'\setminus\{0\}}\frac{1}{|x|^s}=\sum_{x \in \Lambda'\cap\Omega_\Lambda\setminus\{0\}}\sum_{v\in\Lambda}\frac{1}{|x+v|^s}+\sum_{v\in\Lambda\setminus\{0\}}\frac{1}{|v|^s}\\
    &=\sum_{x \in \Lambda'\cap\Omega_\Lambda\setminus\{0\}}\zeta_\Lambda(s;x)+\zeta_\Lambda(s),
  \end{align*}
  thus proving the lemma.
\end{proof}
Using the above lemma and scaling properties of Epstein zeta functions we obtain the following:
\begin{lemma}\label{zetafunc}
For every $m\in\N$ and $s\in\C\setminus\{d\}$, it holds that
\begin{align}\label{zetam}
  \sum_{x \in \frac{1}{m}\Lambda\cap\Omega_\Lambda\setminus\{0\}}\zeta_\Lambda(s;x)=(m^s-1)\zeta_\Lambda(s).
\end{align}
Therefore,
\begin{align}
\sum_{\substack{x, y\in \frac{1}{m}\Lambda\cap\Omega_\Lambda\\x\neq y}}\zeta_\Lambda(s;x-y)=m^d(m^s-1)\zeta_\Lambda(s).
\end{align}
\end{lemma}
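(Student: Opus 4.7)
The plan is to invoke Lemma~\ref{zetaidents} with the sublattice pair $\Lambda \subset \frac{1}{m}\Lambda$, combined with a scaling identity for the Epstein zeta function, namely
\begin{equation*}
\zeta_{\frac{1}{m}\Lambda}(s) = m^s\,\zeta_\Lambda(s), \qquad s \in \C\setminus\{d\}.
\end{equation*}
For $s>d$ this is immediate from the series definition, since $v \mapsto v/m$ bijects $\Lambda\setminus\{0\}$ with $\frac{1}{m}\Lambda\setminus\{0\}$ and $|v/m|^{-s} = m^s|v|^{-s}$. Both sides are analytic on $\C\setminus\{d\}$ (using the representation in Lemma~\ref{lemEpZeta1}, noting that $\left(\frac{1}{m}\Lambda\right)^*=m\Lambda^*$ and $|\frac{1}{m}\Lambda|=m^{-d}|\Lambda|$), so the identity persists by analytic continuation.

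With the scaling in hand, applying Lemma~\ref{zetaidents} with $\Lambda'=\frac{1}{m}\Lambda$ yields
\begin{equation*}
\sum_{x \in \frac{1}{m}\Lambda \cap \Omega_\Lambda \setminus\{0\}} \zeta_\Lambda(s;x) \;=\; \zeta_{\frac{1}{m}\Lambda}(s) - \zeta_\Lambda(s) \;=\; (m^s-1)\,\zeta_\Lambda(s),
\end{equation*}
which is \eqref{zetam}.

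For the double sum, I would exploit the $\Lambda$-periodicity of $\zeta_\Lambda(s;\cdot)$. For each fixed $y \in \frac{1}{m}\Lambda \cap \Omega_\Lambda$, the translated set $\{x-y : x \in \frac{1}{m}\Lambda \cap \Omega_\Lambda\}$ is, modulo $\Lambda$, a complete set of coset representatives for $\frac{1}{m}\Lambda/\Lambda$. Since $\zeta_\Lambda(s;\cdot)$ is $\Lambda$-periodic, the inner sum therefore equals the left side of \eqref{zetam}, namely $(m^s-1)\zeta_\Lambda(s)$, independently of $y$. The index $[\frac{1}{m}\Lambda : \Lambda] = m^d$ counts the $y$'s, delivering $m^d(m^s-1)\zeta_\Lambda(s)$.

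The only genuine content is the scaling identity, and the only care needed there is justifying it at the level of the analytic continuation rather than just the convergent series; the rest is bookkeeping on cosets. No real obstacle is anticipated.
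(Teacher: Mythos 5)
Your proposal is correct and follows exactly the route the paper has in mind: the paper states Lemma~\ref{zetafunc} immediately after Lemma~\ref{zetaidents} with the remark ``Using the above lemma and scaling properties of Epstein zeta functions we obtain the following,'' which is precisely your combination of $\zeta_{\frac{1}{m}\Lambda}(s)=m^s\zeta_\Lambda(s)$ with Lemma~\ref{zetaidents} applied to $\Lambda\subset\frac{1}{m}\Lambda$, and your coset-translation argument (using $\Lambda$-periodicity of $\zeta_\Lambda(s;\cdot)$) is the natural bookkeeping behind the second display, which the paper leaves implicit.
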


We will also require the following two lemmas, which establish continuity properties of the Epstein Hurwitz Zeta function with respect to the lattice.

\begin{lemma}\label{lem16}
   Let $\{P_m\}_{m\in\mathbb{N}}$ be a sequence of $d\times d$ matrices such that $P_m\rightarrow \mbox{P}$ in norm as $m\rightarrow\infty$. Fix any distinct $x$ and $y$ in $\Omega_\Lambda$ and suppose $\{x_m\}_{m\in\mathbb{N}}$ and $\{y_m\}_{m\in\mathbb{N}}$ are sequences in $\Omega_\Lambda$ converging to $x$ and $y$, respectively. Then for any compact set $K\subset\mathbb{C}\setminus\{d\}$, $\zeta_{P_m\Lambda}(s;P_m(x_m-y_m))$ converges to $\zeta_{P\Lambda}(s;P(x-y))$
   and $\zeta_{P_m\Lambda}'(s;P_m(x_m-y_m))$ converges to $\zeta_{P\Lambda}'(s;P(x-y))$ uniformly for $s$ in $K$ as $m\to\infty$.
\end{lemma}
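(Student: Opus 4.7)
The plan is to base the proof on the entire-function representation \eqref{Fsdef}, together with the identity $\zeta_\Lambda(s;x)=F_{s,\Lambda}(x)-2\pi^{d/2}|\Lambda|^{-1}/[\Gamma(s/2)(d-s)]$ from \eqref{FsLam}, valid on $\C\setminus\{d\}$ for $x\notin\Lambda$. Write $z_m:=P_m(x_m-y_m)$ and $z:=P(x-y)$; since $\Omega_\Lambda$ is a set of coset representatives, $x\neq y$ in $\Omega_\Lambda$ forces $x-y\notin\Lambda$, whence $x_m-y_m\notin\Lambda$ for large $m$, and $z\neq 0$ since $P$ is invertible. The renormalization term converges uniformly in $s\in K$ because $|P_m\Lambda|=|\det P_m|\,|\Lambda|\to|\det P|\,|\Lambda|$ and $1/[\Gamma(s/2)(d-s)]$ is continuous on $K\subset\C\setminus\{d\}$. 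The claim thus reduces to showing $F_{s,P_m\Lambda}(z_m)\to F_{s,P\Lambda}(z)$ uniformly on $K$.

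I would handle the two series in \eqref{Fsdef} separately via dominated convergence. Writing $\Lambda=A\Z^d$, label $v_m(k):=P_mAk\in P_m\Lambda$ and $w_m(k):=((P_mA)^T)^{-1}k\in(P_m\Lambda)^*$, so that $v_m(k)\to v(k):=PAk$ and $w_m(k)\to w(k):=((PA)^T)^{-1}k$ for every $k\in\Z^d$. Since $PA$ is invertible, $\sigma_{\min}(P_mA)$ is bounded below and $\sigma_{\max}(P_mA)$ above for large $m$, yielding uniform two-sided bounds $c_1|v(k)|\le|v_m(k)|\le c_2|v(k)|$ and $c_1|w(k)|\le|w_m(k)|\le c_2|w(k)|$. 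Combined with the boundedness of $z_m$ (which forces $|z_m+v_m(k)|\ge c_1|v(k)|/2$ for all but finitely many $k$) and the standard estimate $|t^{s/2-1}/\Gamma(s/2)|\le C_K(t^\alpha+t^\beta)$ for $s\in K$, each summand of the direct series is dominated by a constant multiple of $\int_1^\infty e^{-c|v(k)|^2 t}(t^\alpha+t^\beta)\,\mathrm{d}t$, whose sum over $k\in\Z^d\setminus\{0\}$ converges. The dual series is dominated analogously by summable integrals of the form $\int_0^1 t^{-d/2+\gamma}e^{-c|w(k)|^2/t}\,\mathrm{d}t$. Pointwise convergence of each term is immediate from continuity of the integrands and the matrix operations involved, and since the dominating functions are independent of $m$ and of $s\in K$, dominated convergence delivers the claimed uniform limit.

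For the derivative statement, both $s\mapsto\zeta_{P_m\Lambda}(s;z_m)$ and $s\mapsto\zeta_{P\Lambda}(s;z)$ are holomorphic on $\C\setminus\{d\}$. Given compact $K\subset\C\setminus\{d\}$, pick a slightly larger compact $K'\subset\C\setminus\{d\}$ with $K$ in its interior and $r:=\mathrm{dist}(K,\partial K')>0$. Applying the first part of the lemma on $K'$, Cauchy's integral formula
\[
\zeta'_{P_m\Lambda}(s;z_m)=\frac{1}{2\pi i}\oint_{|w-s|=r}\frac{\zeta_{P_m\Lambda}(w;z_m)}{(w-s)^2}\,\mathrm{d}w
\]
upgrades uniform convergence on $K'$ to uniform convergence of derivatives on $K$. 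The main technical obstacle is producing a single summable dominating bound that works uniformly in $m$; here the invertibility of $P$ (implicit in $P\Lambda$ being a lattice) is essential, supplying a uniform lower bound on $\sigma_{\min}(P_mA)$ and thereby the comparability of $|v_m(k)|$ with $|v(k)|$ (and dually) needed to reduce the two series to convergent Gaussian-type sums over $\Z^d$.
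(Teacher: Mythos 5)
Your proof is correct and follows essentially the same strategy as the paper's: reduce to the integral representation \eqref{Fsdef}, dominate the two series uniformly in $m$ and $s\in K$, and then appeal to Cauchy's integral formula to pass to derivatives. In fact your argument is slightly more careful than the paper's exposition in one respect: when the paper writes $|\zeta_{P_m\Lambda}(s;\cdot)-\zeta_{P\Lambda}(s;\cdot)|=|F_{s,P_m\Lambda}(\cdot)-F_{s,P\Lambda}(\cdot)|$, it silently drops the renormalization term in \eqref{FsLam}, which depends on $|P_m\Lambda|^{-1}$ versus $|P\Lambda|^{-1}$; you correctly isolate this contribution and note that it converges uniformly on $K$ since $|\det P_m|\to|\det P|$ and $1/[\Gamma(s/2)(d-s)]$ is continuous on $K$. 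You also make explicit the singular-value comparability bounds that justify the $m$-independent dominating function, which the paper delegates to the phrase ``it is elementary to establish.''
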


\begin{proof}
Let $R=\sup_{s\in K} \Ree (s)$ and $r=\inf_{s\in K}\Ree(s)$.
Notice that $\sup_{s\in K}|1/\Gamma(\frac{s}{2})|$ is finite since $1/\Gamma(\frac{s}{2})$ is entire.
 Let $m$ be large enough so that $x_m-y_m\not\in\Lambda$.  Using \eqref{Fsdef}, we have
\begin{equation}\label{lem7proof1}
\begin{split}
 |\zeta_{P_m\Lambda}&(s;P_m(x_m-y_m))-\zeta_{P\Lambda}(s;P(x-y))|\\
 &=|F_{s,P_m\Lambda} (P_m x_m-P_m y_m)-F_{s,P\Lambda} (Px-Py)|\\
&\leq \int_1^\infty \sum_{v\in\Lambda}\left |e^{-|P_m(x_m-y_m+v)|^2t}-e^{-|P(x-y+v)|^2t}\right|\frac{|t^{\frac{s}{2}-1}|}{|\Gamma(\frac{s}{2})|}\mathrm{d}t\\
&\qquad\qquad+\int_0^1 \sum_{w\in\Lambda^*\setminus\{0\}}\left| e^{2\pi i w\cdot (x_m-y_m)}  e^{-\frac{\pi^2|(P_m^{-1})^T w|^2}{t}}-e^{2\pi i w\cdot (x-y)}e^{-\frac{\pi^2| (P^{-1})^Tw|^2}{t}}\right|\frac{\pi^\frac{d}{2}|t^{\frac{s-d}{2}-1}|}{|\Gamma(\frac{s}{2})|}\mathrm{d}t\\
&\leq \int_1^\infty \sum_{v\in\Lambda}\left |e^{-|P_m(x_m-y_m+v)|^2t}-e^{-|P(x-y+v)|^2t}\right|\frac{t^{\frac{R}{2}-1}}{\inf_{s\in K}|\Gamma(\frac{s}{2})|}\mathrm{d}t\\
&\qquad\qquad+\int_0^1 \sum_{w\in\Lambda^*\setminus\{0\}}\left| e^{2\pi i w\cdot (x_m-y_m)}  e^{-\frac{\pi^2|(P_m^{-1})^T w|^2}{t}}-e^{2\pi i w\cdot (x-y)}e^{-\frac{\pi^2| (P^{-1})^Tw|^2}{t}}\right|\frac{\pi^\frac{d}{2}t^{\frac{r-d}{2}-1}}{\inf_{s\in K}|\Gamma(\frac{s}{2})|}\mathrm{d}t.
\end{split}
\end{equation}
As in \cite{hardin2014periodic}, it is elementary to establish that integrals of the form
$$
\int_1^\infty \sum_{v\in\Lambda} e^{-|P(x-y+v)|^2t}  {t^{\frac{R}{2}-1}}\mathrm{d}t \text{ and }
\int_0^1 \sum_{w\in\Lambda^*\setminus\{0\}}  e^{-\frac{\pi^2| (P^{-1})^Tw|^2}{t}}{\pi^\frac{d}{2}t^{\frac{r-d}{2}-1}}\mathrm{d}t$$
are finite and thus, by dominated convergence, it follows that the expressions in \eqref{lem7proof1}
  tend to zero as $m\rightarrow\infty$.

 Cauchy's integral formula for derivatives then implies that
$\zeta'_{P_m\Lambda}(s;P_m(x_m-y_m))\rightarrow \zeta'_{P\Lambda}(s;P(x-y))$ uniformly for $s\in K$ as $m\to \infty$.
\end{proof}

We remark that the proof of Lemma~\ref{lem16} shows that $F_{s,P_m\Lambda} (P_m x_m-P_m y_m)$ converges  to $F_{s,P\Lambda} (Px-Py)$ as $m\to \infty$
uniformly for $s$ in any compact set of $\C$.

\begin{corollary}\label{CorMinEn}
Let $\{P_m\}_{m\in\mathbb{N}}$ be a sequence of $d\times d$ matrices such that $P_m\rightarrow \mbox{P}$ in norm as $m\rightarrow\infty$ and suppose $s>0$ or $s=\log$.   Then,  for all $N\ge 2$, we have $\mathcal{E}_{s,P_m\Lambda}(N)\to \mathcal{E}_{s,P\Lambda}(N)$ as $m\to \infty$.
\end{corollary}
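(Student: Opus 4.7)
The plan is to sandwich $\mathcal{E}_{s,P\Lambda}(N)$ between $\limsup_m\mathcal{E}_{s,P_m\Lambda}(N)$ and $\liminf_m\mathcal{E}_{s,P_m\Lambda}(N)$ by a compactness-plus-continuity argument, using (the remark following) Lemma~\ref{lem16} for the continuity and the blow-up of $F_{s,\Lambda}$ at lattice points to rule out degenerate limits. As preparation, I would observe that for $s>0$ (and for $s=\log$) the kernel $F_{s,P'\Lambda}$ is continuous on $\R^d\setminus P'\Lambda$, uniformly bounded below for $P'$ in any compact neighborhood of $P$, and satisfies $F_{s,P'\Lambda}(u)\to+\infty$ as $u\to 0$ uniformly in such $P'$. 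Both facts can be read from the representation \eqref{Fsdef} (resp.\ \eqref{Flogdef}): the Fourier (dual-lattice) sum is uniformly bounded in $u$ and in $P'$ on compact sets, while the $v=0$ contribution $\int_1^\infty e^{-|u|^2t}t^{s/2-1}/\Gamma(s/2)\,\mathrm{d}t$ in the direct sum diverges as $|u|\to 0$ and all other direct-sum terms are nonnegative. Consequently $E_{s,P'\Lambda}$ is lower semicontinuous and bounded below on the compact quotient $(\R^d/P'\Lambda)^N$, and so a minimizer always exists.

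For the $\limsup$ bound, let $\omega_N^*=(P\tilde x_1^*,\ldots,P\tilde x_N^*)$ be a minimizer of $\mathcal{E}_{s,P\Lambda}(N)$ with $\tilde x_j^*\in\overline{\Omega_\Lambda}$ pairwise distinct modulo $\Lambda$, and feed $(P_m\tilde x_1^*,\ldots,P_m\tilde x_N^*)$ as a trial configuration for $\mathcal{E}_{s,P_m\Lambda}(N)$. Lemma~\ref{lem16} applied to the constant sequences $x_m=\tilde x_j^*$, $y_m=\tilde x_k^*$ gives $F_{s,P_m\Lambda}(P_m(\tilde x_j^*-\tilde x_k^*))\to F_{s,P\Lambda}(P(\tilde x_j^*-\tilde x_k^*))$ for each $j\neq k$; summing the $N(N-1)$ pair contributions produces $\limsup_m \mathcal{E}_{s,P_m\Lambda}(N)\le \mathcal{E}_{s,P\Lambda}(N)$.

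For the $\liminf$ bound, I would take minimizers $\omega_N^{(m)}=P_m\tilde\omega_N^{(m)}$ with $\tilde\omega_N^{(m)}\in\overline{\Omega_\Lambda}^N$ and, by compactness, extract a subsequence along which $\tilde x_j^{(m_k)}\to\tilde x_j^\star$ for every $j$. The main obstacle is to rule out the degenerate possibility that $\tilde x_i^\star-\tilde x_j^\star\in\Lambda$ for some $i\ne j$: writing $z_k:=P_{m_k}(\tilde x_i^{(m_k)}-\tilde x_j^{(m_k)})$ and $v_k:=P_{m_k}(\tilde x_i^\star-\tilde x_j^\star)\in P_{m_k}\Lambda$, periodicity gives $F_{s,P_{m_k}\Lambda}(z_k)=F_{s,P_{m_k}\Lambda}(z_k-v_k)$ with $z_k-v_k\to 0$, and the uniform blow-up recorded in the first paragraph forces this pair contribution to $+\infty$; since all other pair contributions are bounded below uniformly in $k$, $E_{s,P_{m_k}\Lambda}(\omega_N^{(m_k)})$ diverges, contradicting the $\limsup$ bound already in hand. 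Hence every pair $\tilde x_i^\star-\tilde x_j^\star$ lies outside $\Lambda$, and Lemma~\ref{lem16} yields $\lim_k E_{s,P_{m_k}\Lambda}(\omega_N^{(m_k)})=E_{s,P\Lambda}(P\tilde\omega_N^\star)\ge\mathcal{E}_{s,P\Lambda}(N)$. Applying this to every subsequence finishes the proof. The $s=\log$ case is identical, with \eqref{Flogdef} replacing \eqref{Fsdef}.
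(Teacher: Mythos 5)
Your proof follows essentially the same route as the paper's: the $\limsup$ bound is obtained by feeding the optimal $P\Lambda$-configuration (transported via $P_m$) into the $P_m\Lambda$-energy and invoking Lemma~\ref{lem16}, and the $\liminf$ bound by extracting, via compactness of $\R^d/\Lambda$, a convergent subsequence of $P_m\Lambda$-minimizers and again applying Lemma~\ref{lem16}. The one substantive addition is your explicit exclusion of the degenerate case $\tilde x_i^\star-\tilde x_j^\star\in\Lambda$ using the uniform lower bound and uniform blow-up of $F_{s,P'\Lambda}$ near the origin together with the already-proved $\limsup$ bound; the paper leaves this step tacit (Lemma~\ref{lem16} is stated only for distinct $x,y$), so your treatment fills a small gap in the published argument.
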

\begin{proof}
Let $\omega_N^*\subset \Omega_\Lambda$ be such that $P\omega_N^*$ is an  ${E}_{s,P\Lambda}$ optimal  $N$-point configuration.  Then,
$$
\limsup_{m\to \infty} \mathcal{E}_{s,P_m\Lambda}(N)\le \limsup_{m\to \infty}  {E}_{s,P_m\Lambda}(P_m\omega_N^*)={E}_{s,P\Lambda}(P\omega_N^*)
= \mathcal{E}_{s,P\Lambda}(N),
$$
where the next to last equality follows from Lemma~\ref{lem16}.

Next let $\omega_N^m=\{x_1^m,\ldots,x_N^m\}\subset \Omega_\Lambda$ be such that $P_m\omega_N^m$ is an optimal $N$-point configuration for ${F}_{s,P_m\Lambda}$.  Let  $\{\omega_N^{m_k}\}_{k\in \N}$ be a subsequence such that $$\lim_{k\to \infty}{E}_{s,P_{m_k}\Lambda}(P_{m_k}\omega_N^{m_k})= \liminf_{m\to \infty} \mathcal{E}_{s,P_m\Lambda}(N).$$  Using the  compactness of $\Omega_\Lambda$ in the `flat torus' topology, we may assume without loss of generality that  $\{\omega_N^{m_k}\}_{k\in \N}$ converges to some $N$-point configuration $\tilde{\omega}_N=\{\tilde x_1,\ldots,\tilde x_N\}$; i.e., $x_j^{m_k}\to \tilde x_j$ as $k\to \infty$ for each $j=1,\ldots, N$.
Then we have
$$
\liminf_{m\to \infty} \mathcal{E}_{s,P_m\Lambda}(N)= \lim_{k\to \infty}  {E}_{s,P_{m_k}\Lambda}(P_{m_k}\omega_N^{m_k}) = {E}_{s,P\Lambda}(P\tilde{\omega}_N)
\ge \mathcal{E}_{s,P\Lambda}(N),
$$
where the next to last equality follows from Lemma~\ref{lem16}.
\end{proof}
Finally,  the following result expresses  continuity properties of the Epstein zeta function with respect to the lattice similar to the results in Lemma~\ref{lem16} for the Epstein Hurwitz zeta function.

\begin{lemma}\label{lem17}
   Let $\{P_m\}_{m\in\mathbb{N}}$ be a sequence of $d\times d$ matrices such that $P_m\rightarrow \mbox{P}$ in norm as $m\rightarrow\infty$.  Then for any compact set $K\subset\mathbb{C}\setminus\{d\}$,
 $\zeta_{P_m\Lambda}(s)$ converges to $\zeta_{P\Lambda}(s)$ uniformly in $K$ and hence $\zeta'_{P_m\Lambda}(s)\rightarrow \zeta'_{P\Lambda}(s)$ for all $s\in\mathbb{C}\setminus\{d\}$
as $m\to \infty$.
 \end{lemma}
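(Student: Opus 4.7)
The plan is to follow the same strategy as in the proof of Lemma~\ref{lem16}, but now without the Hurwitz argument. I would start by writing, via Lemma~\ref{lemEpZeta1},
\[
\zeta_{P_m\Lambda}(s)=\frac{2}{\Gamma(\frac{s}{2})}\left(\frac{2\pi^{\frac{d}{2}}|P_m\Lambda|^{-1}}{s-d}-\frac{1}{s}\right)+S_m^{(1)}(s)+S_m^{(2)}(s),
\]
where $S_m^{(1)}(s):=\sum_{v\in\Lambda\setminus\{0\}}\int_1^\infty e^{-|P_mv|^2t}\frac{t^{s/2-1}}{\Gamma(s/2)}\,\mathrm{d}t$ and, using $(P_m\Lambda)^*=(P_m^T)^{-1}\Lambda^*$ together with $|P_m\Lambda|=|\det P_m|\cdot|\Lambda|$,
\[
S_m^{(2)}(s):=\frac{1}{|\det P_m|\,|\Lambda|}\sum_{u\in\Lambda^*\setminus\{0\}}\int_0^1 \frac{\pi^{d/2}}{t^{d/2}}e^{-\pi^2|(P_m^T)^{-1}u|^2/t}\frac{t^{s/2-1}}{\Gamma(s/2)}\,\mathrm{d}t.
\]
Then I would show each of the three pieces converges uniformly in $s\in K$ as $m\to\infty$ to the analogous expression for $P$.

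The explicit (meromorphic) term depends on the lattice only through $|P_m\Lambda|^{-1}=|\det P_m|^{-1}|\Lambda|^{-1}$. Since $P_m\to P$ in norm (and $P$ is necessarily nonsingular, otherwise $P\Lambda$ would not be a lattice), $|\det P_m|^{-1}\to|\det P|^{-1}$, and the factor $\frac{1}{\Gamma(s/2)(s-d)}$ is bounded on any compact $K\subset\mathbb{C}\setminus\{d\}$ (note $1/\Gamma(s/2)$ is entire, killing the apparent pole at $s=0$), giving uniform convergence on $K$. For $S_m^{(1)}(s)$, the key observation is that since $P_m\to P$ with $P$ invertible, there exist $c>0$ and $m_0$ so that $|P_mv|^2\ge c|v|^2$ for all $v$ and all $m\ge m_0$; then the integrand is dominated by $e^{-c|v|^2 t}\frac{t^{R/2-1}}{\inf_{s\in K}|\Gamma(s/2)|}$ (with $R=\sup_{s\in K}\mathrm{Re}\,s$), which is summable in $v$ and integrable on $[1,\infty)$. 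Dominated convergence then yields term-by-term convergence to $S^{(1)}(s)$ uniformly in $s\in K$. The argument for $S_m^{(2)}(s)$ is symmetric: $(P_m^T)^{-1}\to(P^T)^{-1}$ in norm, so eventually $|(P_m^T)^{-1}u|^2\ge c'|u|^2$, and the Gaussian decay in $u$ together with $t^{(r-d)/2-1}$ (with $r=\inf_{s\in K}\mathrm{Re}\,s$) being integrable on $(0,1]$ after multiplying by the Gaussian makes dominated convergence applicable in exactly the form given in Lemma~\ref{lem16}.

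Having established that $\zeta_{P_m\Lambda}(s)\to\zeta_{P\Lambda}(s)$ uniformly on compact subsets of $\mathbb{C}\setminus\{d\}$, the assertion for the derivative follows at once: each $\zeta_{P_m\Lambda}$ and $\zeta_{P\Lambda}$ is holomorphic on $\mathbb{C}\setminus\{d\}$, and for any $s_0\in\mathbb{C}\setminus\{d\}$ we may choose a small circle $\gamma$ around $s_0$ contained in $\mathbb{C}\setminus\{d\}$; Cauchy's integral formula for derivatives then gives
\[
\zeta'_{P_m\Lambda}(s_0)-\zeta'_{P\Lambda}(s_0)=\frac{1}{2\pi i}\oint_\gamma \frac{\zeta_{P_m\Lambda}(s)-\zeta_{P\Lambda}(s)}{(s-s_0)^2}\,\mathrm{d}s,
\]
and the uniform convergence on $\gamma$ forces this to zero as $m\to\infty$.

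The main (and essentially only) obstacle is bookkeeping: one must carefully verify the lower bounds $|P_mv|\ge c|v|$ and $|(P_m^T)^{-1}u|\ge c'|u|$ that activate dominated convergence, and handle the dual-lattice reparametrization via $(P_m\Lambda)^*=(P_m^T)^{-1}\Lambda^*$. Once this is in place, the proof is essentially a transcription of the argument already carried out for Lemma~\ref{lem16}, with the sums starting from $v\in\Lambda\setminus\{0\}$ rather than from $\Lambda$ and with no phase factors $e^{2\pi i w\cdot(x-y)}$ to track.
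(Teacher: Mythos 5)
Your proof is correct and takes essentially the same route as the paper's: the paper simply invokes Lemma~\ref{lemEpZeta1}, says ``a similar argument as in the proof of Lemma~\ref{lem16}'' gives the locally uniform convergence of $\zeta_{P_m\Lambda}$, and then cites Cauchy's integral formula for the derivatives; you have carried out that transcription in detail, including the dual-lattice reparametrization and the dominating bounds needed for dominated convergence.
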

 \begin{proof}
 Using Lemma~\ref{lemEpZeta1},  a similar argument as in the proof of Lemma \ref{lem16} implies that $\zeta_{P_m\Lambda}(s)$ converges uniformly to $\zeta_{P\Lambda}(s)$ on compact sets $K\subset \mathbb{C}\setminus\{d\}$.  The convergence of the derivatives then follows from Cauchy's integral formula for derivatives.
 \end{proof}


\section{Proof of Theorem~\ref{thm2} \label{prf2}}

Throughout this section and the next we shall assume that $\Lambda=A\Z^d$ denotes a $d$-dimensional lattice in $\R^d$ with fundamental domain $\Omega=\Omega_{\Lambda}$, co-volume 1, and generating matrix $A$.  Then Theorem~\ref{thm2} follows from a simple rescaling.
We shall find it convenient  to use what we call the
{\em classical periodic Riesz $s$-potential} $F_{s,\Lambda}^{\rm cp}(x):=\zeta_{\Lambda}(s;x)$ which, for $s\neq d$, differs from $F_{s,\Lambda} $ only by the constant $\frac{2\pi^{\frac{d}{2}}}{\Gamma(\frac{s}{2})(d-s)}$.  Similarly, we call  $F_{\log,\Lambda}^{\rm cp}(x):=2\zeta_{\Lambda}'(0;x)$ the
{\em classical periodic logarithmic potential}.  The  energies associated with these potentials are given by
 \begin{equation}\label{Escpdef}
 E_{s,\Lambda}^{\rm cp} (\omega_N):=\sum_{j\neq k}\zeta_{\Lambda}(s;x_j-x_k), \qquad (s>0),
\end{equation}
and, similarly,
 \begin{equation}\label{Elogcpdef}
 E_{\log,\Lambda}^{\rm cp} (\omega_N):=2\sum_{j\neq k}\zeta_{\Lambda}'(0;x_j-x_k),
\end{equation}
and we denote the respective minimal $N$-point energies by $\mathcal{E}_{s,\Lambda}^{\rm cp}(N)$ and $\mathcal{E}_{\log,\Lambda}^{\rm cp}(N)$.

From \eqref{FsLam}, we obtain
\begin{equation}
  \mathcal{E}_{s,\Lambda} (N)= \mathcal{E}_{s,\Lambda}^{\rm cp}(N)+\frac{2\pi^{\frac{d}{2}}}{\Gamma(\frac{s}{2})(d-s)}N(N-1),  \label{eq27}
\end{equation}
and \begin{equation}
  \mathcal{E}_{\log,\Lambda} (N)= \mathcal{E}_{\log,\Lambda}^{\rm cp}(N)+\frac{2\pi^{\frac{d}{2}}}{d}N(N-1),  \label{eq28}
\end{equation}

Define
\begin{align*}
  \underline{g}_{s,d}(\Lambda)&:=\liminf_{N\rightarrow \infty} \frac{\mathcal{E}_{s,\Lambda}^{\rm cp}(N)}{N^{1+s/d}},\\ \overline{g}_{s,d}(\Lambda)&:=\limsup_{N\rightarrow \infty} \frac{\mathcal{E}_{s,\Lambda}^{\rm cp}(N)}{N^{1+s/d}},\\
  \underline{g}_{\log,d}(\Lambda)&:=\liminf_{N\rightarrow \infty} \frac{\mathcal{E}_{\log,\Lambda}^{\rm cp}(N)+\frac{2}{d}N\log N}{N},\\ \overline{g}_{\log,d}(\Lambda)&:=\limsup_{N\rightarrow \infty} \frac{\mathcal{E}_{\log,\Lambda}^{\rm cp}(N)+\frac{2}{d}N\log N}{N}.
\end{align*}
Our use of these quantities is motivated by the proof of the main results in \cite{hardin2005minimal}, and indeed the general strategy of our proofs is similar to that of \cite{hardin2005minimal}.  More precisely, we shall prove $\underline{g}_{s,d}(\Lambda)=\overline{g}_{s,d}(\Lambda)$ and $\underline{g}_{\log,d}(\Lambda)=\overline{g}_{\log,d}(\Lambda)$ and that these limits are finite.
We first need estimates on quantities appearing in  \eqref{Fsdef} and \eqref{Flogdef}.

\begin{lemma}\label{lem15} Let $s>0$ and $\Lambda$ be a $d$-dimensional lattice with co-volume 1 and $l_0:=\min\limits_{0\neq v\in \Lambda}\{|v|\}$. The following relations hold.
  \begin{equation}\label{lem15a}
\sum_{w\in\Lambda^*} e^{-\frac{\pi^2|w|^2}{t}}t^{-\frac{d}{2}}=\pi^{-\frac{d}{2}}+O(e^{-l_0^2 t}),\qquad \text{as }t\rightarrow \infty,
\end{equation}
\begin{equation}\label{lem15b}
  \sum_{w\in\Lambda^*}\int_1^{\frac{1}{\delta}}e^{-\frac{\pi^2|w|^2}{t}}t^{\frac{s-d}{2}-1}\mathrm{d}t=\frac{2\pi^{-\frac{d}{2}}}{s}\delta^{-\frac{s}{2}}+O(1), \qquad  \text{as $\delta\rightarrow 0^+$},
  \end{equation}
  \begin{equation}\label{lem15c}
\sum_{w\in\Lambda^*}\int_1^{\frac{1}{\delta}}e^{-\frac{\pi^2|w|^2}{t}}t^{-\frac{d}{2}-1}\mathrm{d}t=\pi^{-\frac{d}{2}}\log\delta^{-1}+O(1), \qquad  \text{as $\delta\rightarrow 0^+$}\end{equation}
\end{lemma}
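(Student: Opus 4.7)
\medskip
\noindent
\textbf{Proof plan.}
The three estimates are closely linked: (a) is the key input, and both (b) and (c) follow from it by substituting inside the integral and computing elementary $t$-integrals. So the plan is to prove (a) first via Poisson summation and then to derive (b) and (c) as consequences.

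For (a), the idea is to apply the Jacobi theta transformation to the Gaussian sum on the dual lattice. Setting $\tau = \pi/t$, the sum $\sum_{w\in\Lambda^*} e^{-\pi^2|w|^2/t}$ becomes $\sum_{w\in\Lambda^*} e^{-\pi\tau |w|^2}$. The function $f(x)=e^{-\pi\tau|x|^2}$ has Fourier transform $\widehat f(\xi)=\tau^{-d/2}e^{-\pi|\xi|^2/\tau}$, so Poisson summation over $\Lambda^*$ (whose dual is $\Lambda$, and whose co-volume equals $|\Lambda|^{-1}=1$) gives
\[
\sum_{w\in\Lambda^*} e^{-\pi\tau|w|^2} \;=\; \tau^{-d/2}\sum_{v\in\Lambda} e^{-\pi|v|^2/\tau}.
\]
Resubstituting $\tau=\pi/t$ yields
\[
\sum_{w\in\Lambda^*} e^{-\pi^2|w|^2/t} \;=\; (t/\pi)^{d/2}\sum_{v\in\Lambda} e^{-t|v|^2} \;=\; (t/\pi)^{d/2}\Bigl(1+\!\!\sum_{v\in\Lambda\setminus\{0\}}\!\! e^{-t|v|^2}\Bigr).
\]
The tail $\sum_{v\in\Lambda\setminus\{0\}} e^{-t|v|^2}$ is bounded by $e^{-l_0^2 t/2}\sum_{v\in\Lambda\setminus\{0\}} e^{-|v|^2/2}$ for $t\ge 1$, hence it is $O(e^{-l_0^2 t})$ as $t\to\infty$. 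Multiplying by $t^{-d/2}$ produces (a); the main obstacle here is simply bookkeeping the Fourier-transform normalization and the fact that for $\Lambda$ of co-volume $1$ the dual $\Lambda^*$ also has co-volume $1$.

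For (b) and (c), interchange sum and integral (justified by Tonelli since integrands are positive) and use (a) in the rewritten form
\[
\sum_{w\in\Lambda^*} e^{-\pi^2|w|^2/t} \;=\; \pi^{-d/2}t^{d/2}+O\bigl(t^{d/2}e^{-l_0^2 t}\bigr).
\]
For (b), multiplying by $t^{(s-d)/2-1}$ gives an integrand $\pi^{-d/2}t^{s/2-1}+O(t^{s/2-1}e^{-l_0^2 t})$. Integrating from $1$ to $1/\delta$:
\begin{align*}
\pi^{-d/2}\!\int_1^{1/\delta}\! t^{s/2-1}\,dt &= \tfrac{2\pi^{-d/2}}{s}\bigl(\delta^{-s/2}-1\bigr)=\tfrac{2\pi^{-d/2}}{s}\delta^{-s/2}+O(1),
\end{align*}
while the error integral is dominated by $\int_1^\infty t^{s/2-1}e^{-l_0^2 t}\,dt<\infty$, hence $O(1)$.

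For (c), the exponent $-d/2-1$ gives an integrand $\pi^{-d/2}t^{-1}+O(t^{-1}e^{-l_0^2 t})$, and the main integral is $\pi^{-d/2}\log(1/\delta)$ while the error integral is again uniformly $O(1)$. The only serious step is (a); (b) and (c) reduce to elementary one-variable integrations once (a) is in hand, and the Tonelli interchange presents no difficulty because every term is nonnegative.
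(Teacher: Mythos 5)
Your proof takes essentially the same route as the paper's: apply Poisson summation (using that $\Lambda$ and hence $\Lambda^*$ have co-volume $1$) to establish \eqref{lem15a}, then multiply the resulting estimate by $t^{s/2-1}$ (or $t^{-1}$) and integrate from $1$ to $1/\delta$, pushing the error integral out to $\infty$ to see it is $O(1)$. One small slip in your derivation of \eqref{lem15a}: the bound $e^{-t|v|^2}\le e^{-l_0^2 t/2}\,e^{-|v|^2/2}$ for $t\ge 1$ yields $\sum_{v\in\Lambda\setminus\{0\}}e^{-t|v|^2}=O(e^{-l_0^2 t/2})$, not the $O(e^{-l_0^2 t})$ you then assert and which the lemma claims; a factor of $e^{-l_0^2 t/2}$ is \emph{not} $O(e^{-l_0^2 t})$. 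To get the stated exponent, factor out all of $e^{-l_0^2 t}$ instead of half: write $e^{-t|v|^2}=e^{-l_0^2 t}\,e^{-(|v|^2-l_0^2)t}\le e^{-l_0^2 t}\,e^{-(|v|^2-l_0^2)}$ for $t\ge 1$, and note the remaining series over $v\ne 0$ is a finite constant. Your weaker exponent would in fact still suffice for \eqref{lem15b} and \eqref{lem15c}, since any fixed positive rate of exponential decay makes the error integral converge, but as written your argument does not establish \eqref{lem15a} exactly as stated.
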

\begin{proof}
  Applying Poisson Summation, we obtain
\[
\sum_{w\in\Lambda^*} e^{-\frac{\pi^2|w|^2}{t}}t^{-\frac{d}{2}}=\pi^{-d/2}\sum_{v\in\Lambda} e^{-|v|^2 t}=\pi^{-d/2}+\pi^{-d/2}e^{-l_0^2 t}\sum_{v\in\Lambda\setminus\{0\}} e^{-(|v|^2-l_0^2)t}=\pi^{-d/2}+O(e^{-l_0^2 t}),
\]
proving  \eqref{lem15a}.
 Hence, there exists a constant $C_1$ such that
$$
\left|\sum_{w\in\Lambda^*} e^{-\frac{\pi^2|w|^2}{t}}t^{-\frac{d}{2}}-\pi^{-\frac{d}{2}}\right|\leq C_1e^{-l_0^2 t},
$$
and so, multiplying both sides of the above by $t^{\frac{s}{2}-1}$, we have
$$
\left|\sum_{w\in\Lambda^*} e^{-\frac{\pi^2|w|^2}{t}}t^{\frac{s-d}{2}-1}-\pi^{-\frac{d}{2}}t^{\frac{s}{2}-1}\right|\leq C_1t^{\frac{s}{2}-1}e^{-l_0^2 t}
$$
and so
\begin{equation}\label{lem15proof1}
\begin{split} &\left|\int_1^{\frac{1}{\delta}}\left( \sum_{w\in\Lambda^*} e^{-\frac{\pi^2|w|^2}{t}}t^{\frac{s-d}{2}-1}-\pi^{-\frac{d}{2}}t^{\frac{s}{2}-1}\right)\mathrm{d}t\right|\leq \int_1^{\frac{1}{\delta}}C_1t^{\frac{s}{2}-1}e^{-l_0^2 t}\mathrm{d}t\\ &\qquad \leq \int_1^{\infty}C_1t^{\frac{s}{2}-1}e^{-l_0^2 t}\mathrm{d}t=:C_2(s).
\end{split}
\end{equation}
Therefore,
$$\left|\sum_{w\in\Lambda^*}\int_1^{\frac{1}{\delta}}e^{-\frac{\pi^2|w|^2}{t}}t^{\frac{s-d}{2}-1}\mathrm{d}t-\frac{2\pi^{-\frac{d}{2}}}{s}(\delta^{-\frac{s}{2}}-1)\right|\leq C_2(s), \qquad s>0
$$
proving \eqref{lem15b}, while substituting $s=0$ into \eqref{lem15proof1} yields
$$\left|\sum_{w\in\Lambda^*}\int_1^{\frac{1}{\delta}}e^{-\frac{\pi^2|w|^2}{t}}t^{-\frac{d}{2}-1}\mathrm{d}t-\pi^{-\frac{d}{2}}\log\delta^{-1}\right|\leq C_2(0), \qquad s=0.
$$
 proving  \eqref{lem15c}.
\end{proof}

The following lemma is the key calculation that allows us to apply the method of \cite{hardin2005minimal}.  Once we have established this lemma, the only remaining technical difficulty will be to establish the fact that the constants $C_{s,d}$ and $C_{\log,d}$ are independent of the lattice $\Lambda$.

\begin{lemma}\label{finlims}
With $\Lambda$ as in Lemma~\ref{lem15} and $s>0$, the following inequalities hold:
\begin{align*}
-\infty<\underline{g}_{s,d}(\Lambda)&\leq\overline{g}_{s,d}(\Lambda)\leq \zeta_\Lambda(s)<\infty, \\
-\infty<\underline{g}_{\log,d}(\Lambda)&\leq\overline{g}_{\log,d}(\Lambda)\leq 0.
\end{align*}
\end{lemma}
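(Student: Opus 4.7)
The plan is to prove the upper and lower bounds separately, treating the Riesz case in detail with the logarithmic case following by differentiation at $s=0$. I will assume throughout $0<s<d$; the range $s>d$ is handled by Theorem~\ref{thm4} and the corollary stated after it.

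For the \textbf{upper bounds}, I would exhibit, for each $N$, a specific $N$-point configuration built around the scaled lattice $\omega_{m^d} := (1/m)\Lambda \cap \Omega_\Lambda$ with $m = \lfloor N^{1/d}\rfloor$. Lemma~\ref{zetafunc} gives exactly $E_{s,\Lambda}^{\rm cp}(\omega_{m^d}) = m^d(m^s - 1)\zeta_\Lambda(s)$, so using $m^d/N \to 1$ and $m/N^{1/d}\to 1$, this ratio divided by $N^{1+s/d}$ tends to $\zeta_\Lambda(s)$. For $N = m^d + r$ with $r = O(m^{d-1})$, I would supplement the lattice with $r$ additional points $z_1,\ldots,z_r$ drawn i.i.d.\ from $\lambda_d$; by linearity of expectation and the identity $\int_{\Omega_\Lambda}\zeta_\Lambda(s;y)\,d\lambda_d(y) = 0$ from the excerpt, the expected cross- and self-energy contributions of the $z_i$'s both vanish, so there is a specific realization whose additional energy is nonpositive. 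This yields $E_{s,\Lambda}^{\rm cp}(\omega_N) \leq m^d(m^s-1)\zeta_\Lambda(s)$, hence $\overline g_{s,d}(\Lambda) \leq \zeta_\Lambda(s)$. For the logarithmic case, differentiating Lemma~\ref{zetafunc} at $s=0$ and using $\zeta_\Lambda(0) = -1$ gives $E_{\log,\Lambda}^{\rm cp}(\omega_{m^d}) = -2m^d\log m$, so $E_{\log,\Lambda}^{\rm cp}(\omega_{m^d}) + (2/d)N\log N = 0$ when $N = m^d$; the same probabilistic extension, now invoking $\int_{\Omega_\Lambda}\zeta_\Lambda'(0;y)\,d\lambda_d = 0$, gives $\overline g_{\log,d}(\Lambda) \leq 0$.

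For the \textbf{lower bounds}, I would use an Ewald-type decomposition at an adjustable splitting parameter $\alpha > 0$. Starting from \eqref{Fsdef} and \eqref{FsLam}, rewrite
\[
\zeta_\Lambda(s;x) = C_\alpha(s) + \varphi_\alpha(x) + \psi_\alpha(x),
\]
where $\varphi_\alpha(x) := \sum_{v\in\Lambda}\int_\alpha^\infty e^{-|x+v|^2 t} t^{s/2-1}/\Gamma(s/2)\,dt \ge 0$ is the short-range remainder, $\psi_\alpha(x)$ is the Fourier series $\sum_{w\in\Lambda^*\setminus\{0\}}e^{2\pi i w\cdot x}\hat\psi_w$ with nonnegative coefficients $\hat\psi_w$, and the constant $C_\alpha(s) = \frac{2\pi^{d/2}}{\Gamma(s/2)(s-d)}\alpha^{(s-d)/2}$ arises from the zero Fourier mode produced by Poisson summation. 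For any configuration $\omega_N$, the $\varphi_\alpha$ double sum is nonnegative, and Fourier positivity gives $\sum_{j,k}\psi_\alpha(x_j-x_k) = \sum_{w\neq 0}\hat\psi_w|S_w|^2 \ge 0$ with $S_w := \sum_j e^{2\pi i w\cdot x_j}$, so
\[
E_{s,\Lambda}^{\rm cp}(\omega_N) \ge C_\alpha(s)\,N(N-1) - N\psi_\alpha(0).
\]
Lemma~\ref{lem15}\eqref{lem15b} then gives $\psi_\alpha(0) = \frac{2}{s\Gamma(s/2)}\alpha^{s/2} + O(1)$ as $\alpha \to \infty$, while $C_\alpha(s) = O(\alpha^{(s-d)/2})$. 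Choosing $\alpha = N^{2/d}$ balances both terms at order $N^{1+s/d}$, giving $E_{s,\Lambda}^{\rm cp}(\omega_N) \ge -CN^{1+s/d}$ uniformly in $\omega_N$, hence $\underline g_{s,d}(\Lambda) > -\infty$. The logarithmic case is obtained by differentiating the decomposition at $s=0$ and using \eqref{lem15c} in place of \eqref{lem15b}, with $\alpha$ again taken as a power of $N$ producing a lower bound of the form $-CN$.

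The \textbf{main obstacle} is executing the Ewald calculation precisely: the constant $C_\alpha(s)$ from the $w=0$ Poisson mode must be tracked carefully through analytic continuation (its sign flips across $s=d$), and the asymptotic orders of $C_\alpha(s)\cdot N^2$ and $N\psi_\alpha(0)$ must match so that the choice $\alpha = N^{2/d}$ puts both lower-bound contributions at the same scale $N^{1+s/d}$. A secondary subtlety in the upper bound is ensuring that the probabilistic expectations are well-defined, which amounts to integrability of $\zeta_\Lambda(s;\cdot)$ over $\Omega_\Lambda\times\Omega_\Lambda$; this holds since the only singularity is $|y|^{-s}$, integrable in dimension $d$ precisely when $s<d$.
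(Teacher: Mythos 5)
Your proposal is correct, and both halves (upper and lower bounds) arrive at the stated conclusions, but the paper proceeds differently in one of the two halves, and it is worth spelling out the comparison.

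For the \textbf{lower bound}, your Ewald split at a variable point $\alpha$ is essentially equivalent to what the paper does. The paper keeps the split at $t=1$ (as in \eqref{Fsdef}), bounds $I_1$ from below by the truncated kernel $h_\delta(x)=\int_1^{1/\delta}e^{-|x|^2t}\,t^{s/2-1}/\Gamma(s/2)\,\mathrm{d}t$, applies Poisson summation to $h_\delta$, and then uses Fourier positivity of both contributions; Lemma~\ref{lem15} then controls $\sum_w\hat h_\delta(w)$, and $\delta=\pi^{-1}N^{-2/d}$ (matching your $\alpha=N^{2/d}$) produces a finite lower bound at the scale $N^{1+s/d}$. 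Your version relocates the split to $t=\alpha$ and carries the analytically continued $w=0$ constant $C_\alpha(s)=\tfrac{2\pi^{d/2}}{\Gamma(s/2)(s-d)}\alpha^{(s-d)/2}$, after which the nonnegativity of $\varphi_\alpha$ and the nonnegativity of the Fourier coefficients of $\psi_\alpha$ give the same two-term bound. Either bookkeeping works, and you correctly identify the two delicate points (tracking the sign of $C_\alpha(s)$ across $s=d$, and balancing $C_\alpha(s)N^2$ against $N\psi_\alpha(0)$); the computation $C_\alpha(s)N(N-1)\sim -\tfrac{2\pi^{d/2}}{\Gamma(s/2)(d-s)}N^{1+s/d}$ and $N\psi_\alpha(0)\sim\tfrac{2}{s\Gamma(s/2)}N^{1+s/d}$ with $\alpha=N^{2/d}$ is exactly right.

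For the \textbf{upper bound} you take a genuinely different route. Both you and the paper start from the scaled lattice $\omega^m=(1/m)\Lambda\cap\Omega_\Lambda$ and Lemma~\ref{zetafunc}, giving $E_{s,\Lambda}^{\rm cp}(\omega^m)=m^d(m^s-1)\zeta_\Lambda(s)$ and, via $\zeta_\Lambda(0)=-1$, $E_{\log,\Lambda}^{\rm cp}(\omega^m)=-2m^d\log m$. The difference is how one interpolates between $N=m^d$ and general $N$. The paper chooses $m$ with $(m-1)^d<N\le m^d$ and invokes the classical fact that $\{\mathcal{E}_{s,\Lambda}^{\rm cp}(N)/(N(N-1))\}_{N\ge2}$ is nondecreasing (citing \cite[Ch.~II, \S3.12]{landkof1972foundations}), which transfers the bound at $m^d$ down to $N$. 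You instead take $m=\lfloor N^{1/d}\rfloor$ and add $r=N-m^d=O(m^{d-1})$ i.i.d.\ points sampled from $\lambda_d$, then use the mean-zero identities $\int_{\Omega_\Lambda}\zeta_\Lambda(s;y)\,\mathrm{d}\lambda_d(y)=0$ and $\int_{\Omega_\Lambda}\zeta_\Lambda'(0;y)\,\mathrm{d}\lambda_d(y)=0$ plus $L^1$-integrability of $\zeta_\Lambda(s;\cdot)$ near the lattice (here is precisely where $0<s<d$ enters) to extract a realization whose additional cross- and self-energy is nonpositive. This is a correct probabilistic averaging argument and avoids the appeal to the monotonicity of $\mathcal{E}^{\rm cp}(N)/(N(N-1))$. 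Note, however, that the paper reuses that monotonicity again later in the proof of Theorem~\ref{thm2} itself, so you would not actually eliminate that external input from the overall argument; for this lemma in isolation, though, your route is self-contained and valid. Finally, you restrict to $0<s<d$ while the lemma is stated for $s>0$; the paper's argument does not need that restriction (and for $s>d$ the finiteness of $\overline{g}_{s,d}$ and $\underline{g}_{s,d}$ can indeed be read off from Theorem~\ref{thm4}), so your remark that the $s>d$ case is covered elsewhere is an acceptable reduction.
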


\begin{proof}
Let us first consider the case  $s>0$.
For any configuration $\omega_N=(x_j)_{j=1}^N$ in $\Omega_\Lambda$ and any $\delta\in(0,1]$,
\begin{align*}
  E_{s,\Lambda} (\omega_N)&=\sum_{j\neq k}K_{s,\Lambda} (x_j,x_k)=:I_1+I_2.
\end{align*}
where
\begin{align*}
  I_1&=\sum_{j\neq k}\sum_{v\in\Lambda}\int_1^\infty e^{-|x_j-x_k+v|^2t}\frac{t^{\frac{s}{2}-1}}{\Gamma(\frac{s}{2})}\mathrm{d}t,\\
  I_2&=\sum_{j\neq k}\sum_{w\in\Lambda^*\setminus\{0\}}e^{2\pi i w\cdot (x_j-x_k)}\int_0^1 \frac{\pi^{\frac{d}{2}}}{t^\frac{d}{2}}e^{-\frac{\pi^2|w|^2}{t}}\frac{t^{\frac{s}{2}-1}}{\Gamma(\frac{s}{2})}\mathrm{d}t.
\end{align*}
Let
\begin{align*}
  h_{\delta}(x):=\int_1^{\frac{1}{\delta}} e^{-|x|^2t}\frac{t^{\frac{s}{2}-1}}{\Gamma(\frac{s}{2})}\mathrm{d}t.
\end{align*}
then
\begin{align*}
  \hat{h}_{\delta}(\xi)=\int_1^{\frac{1}{\delta}} \left(\frac{\pi}{t}\right)^{\frac{d}{2}} e^{-\frac{\pi^2|\xi|^2}{t}} \frac{t^{\frac{s}{2}-1}}{\Gamma(\frac{s}{2})}\mathrm{d}t\geq 0,\quad \hat{h}_{\delta}(0)=\frac{2\pi^{\frac{d}{2}}}{\Gamma(\frac{s}{2})(d-s)}\left(1-\delta^{\frac{d-s}{2}}\right).
\end{align*}
Notice that since the upper limit of the integral defining $h_{\delta}$ is finite, it is easy to verify that $h_{\delta}$ satisfies the hypotheses of Poisson Summation.  Applying it gives us the following inequalities:
\begin{equation}\label{eq19}\begin{split}
I_1&\geq \sum_{j\neq k}\sum_{v\in\Lambda} h_{\delta}(x_j-x_k+v) \\
&=\sum_{j\neq k}\sum_{w\in\Lambda^*} \hat{h}_{\delta}(w)e^{2\pi i w\cdot(x_j-x_k)}\\
&=\sum_{w\in\Lambda^*} \hat{h}_{\delta}(w)\left(\sum_{j,k}e^{2\pi i w\cdot(x_j-x_k)}-N\right)\\
&=\sum_{w\in\Lambda^*} \hat{h}_{\delta}(w)\left(\left|\sum_{j}e^{2\pi i w\cdot x_j}\right|^2-N\right)\\
&\geq N^2\hat{h}_{\delta}(0)-N\sum_{w\in\Lambda^*}\hat{h}_{\delta}(w)\\
&=N^2\frac{2\pi^{\frac{d}{2}}}{\Gamma(\frac{s}{2})(d-s)}\left(1-\delta^{\frac{d-s}{2}}\right)-N\frac{\pi^{\frac{d}{2}}}{\Gamma(\frac{s}{2})}\sum_{w\in\Lambda^*}\int_1^{\frac{1}{\delta}} e^{-\frac{\pi^2|w|^2}{t}}t^{\frac{s-d}{2}-1}\mathrm{d}t
\end{split}\end{equation}
By Lemma \ref{lem15}, we conclude
\begin{equation}\label{eq19b}
\begin{split}
  I_1&\geq N^2\frac{2\pi^{\frac{d}{2}}}{\Gamma(\frac{s}{2})(d-s)}\left(1-\delta^{\frac{d-s}{2}}\right)-N\frac{2\pi^{\frac{d}{2}}}{s\Gamma(\frac{s}{2})}\left(\pi^{-\frac{d}{2}}\delta^{-\frac{s}{2}}+O(1)\right) \\
&=\frac{2\pi^{\frac{d}{2}}}{\Gamma(\frac{s}{2})(d-s)}N^2-\frac{2\pi^{\frac{d}{2}}}{\Gamma(\frac{s}{2})(d-s)}N^2\delta^{\frac{d-s}{2}}-\frac{2}{s\Gamma(\frac{s}{2})}N\delta^{-\frac{s}{2}}-O(N),
\end{split}
\end{equation}

To obtain lower bounds on $I_2$, we calculate
\begin{equation}\label{eq20} \begin{split}
     I_2&=\sum_{w\in\Lambda^*\setminus\{0\}}\left(\sum_{j,k}e^{2\pi i w\cdot (x_j-x_k)}-N\right)\int_0^1 \frac{\pi^{\frac{d}{2}}}{t^\frac{d}{2}}e^{-\frac{\pi^2|w|^2}{t}}\frac{t^{\frac{s}{2}-1}}{\Gamma(\frac{s}{2})}\mathrm{d}t  \\
  &=\sum_{w\in\Lambda^*\setminus\{0\}}\left(\left|\sum_{j}e^{2\pi i w\cdot x_j}\right|^2-N\right)\int_0^1 \frac{\pi^{\frac{d}{2}}}{t^\frac{d}{2}}e^{-\frac{\pi^2|w|^2}{t}}\frac{t^{\frac{s}{2}-1}}{\Gamma(\frac{s}{2})}\mathrm{d}t \\
  &\geq -N\cdot  \frac{\pi^{\frac{d}{2}}}{\Gamma(\frac{s}{2})}\sum_{w\in\Lambda^*\setminus\{0\}}\int_0^1e^{-\frac{\pi^2|w|^2}{t}}t^{\frac{s-d}{2}-1}\mathrm{d}t \\
  &=O(N).
 \end{split}
 \end{equation}
 Therefore
 \begin{equation*}
   \begin{split}
     E_{s,\Lambda} (\omega_N)&=I_1+I_2\geq\frac{2\pi^{\frac{d}{2}}}{\Gamma(\frac{s}{2})(d-s)}N^2-\frac{2\pi^{\frac{d}{2}}}{\Gamma(\frac{s}{2})(d-s)}N^2\delta^{\frac{d-s}{2}}-\frac{2}{s\Gamma(\frac{s}{2})}N\delta^{-\frac{s}{2}}-O(N).
   \end{split}
 \end{equation*}
 If we let $\delta=\pi^{-1}N^{-\frac{2}{d}}$, then this lower bound becomes
 \begin{equation} \label{eq4}
   E_{s,\Lambda} (\omega_N)\geq\frac{2\pi^{\frac{d}{2}}}{\Gamma(\frac{s}{2})(d-s)}N^2+C^*N^{1+\frac{s}{d}}+O(N).
 \end{equation}
where
\begin{align*}
  C^*=-\frac{2\pi^{\frac{s}{2}}d}{\Gamma(\frac{s}{2})s(d-s)}.
\end{align*}
 The right hand side of (\ref{eq4}) is independent of $\omega_N$ and thus
 \begin{align*}
   \mathcal{E}_{s,\Lambda} (N)&\geq\frac{2\pi^{\frac{d}{2}}}{\Gamma(\frac{s}{2})(d-s)}N^2+C^*N^{1+\frac{s}{d}}+O(N),\\
   \mathcal{E}_{s,\Lambda}^{\rm cp}(N)&\geq C^*N^{1+\frac{s}{d}}+O(N).
 \end{align*}
We conclude that
 $\underline{g}_{s,d}(\Lambda)\geq C^*$.

 To establish the finiteness of $\overline{g}_{s,d}$, we will use the same method as was used in \cite{hardin2005minimal}.  For any natural number $N$, let $m=m_N$ be a positive integer such that $(m-1)^d<N\leq m^d$. Let $\omega^m=\frac{1}{m}\Lambda\cap\Omega_\Lambda$. Then
 \begin{equation}\label{eq23}
   \mathcal{E}_{s,\Lambda}^{\rm cp}(m^d)\leq E_{s,\Lambda}^{\rm cp}(\omega^m)=\sum_{\substack{x_j,x_k\in \frac{1}{m}\Lambda\cap\Omega_\Lambda\\x_j\neq x_k}}\zeta_\Lambda(s;x_j-x_k)=m^d(m^s-1)\zeta_\Lambda(s),
 \end{equation}
where we used Lemma \ref{zetafunc}

 \vspace{2mm}

 As $\left\{\frac{\mathcal{E}_{s,\Lambda}^{\rm cp}(N)}{N(N-1)}\right\}_{N=2}^\infty$ is an increasing sequence (see, e.g., \cite[Chapter II \S 3.12, page 160]{landkof1972foundations}) we arrive at the following:
 \begin{align*}
   \overline{g}_{s,d}(\Lambda)&=\limsup_{N\rightarrow \infty} \frac{\mathcal{E}_{s,\Lambda}^{\rm cp}(N)}{N^{1+s/d}}
=\limsup_{N\rightarrow \infty} \frac{\mathcal{E}_{s,\Lambda}^{\rm cp}(N)}{N(N-1)}\cdot \frac{N-1}{N^{\frac{s}{d}}}\\
&\leq\limsup_{N\rightarrow \infty} \frac{\mathcal{E}_{s,\Lambda}^{\rm cp}(\Omega_\Lambda,m^d)}{m^d(m^d-1)}\cdot \frac{N-1}{N^{\frac{s}{d}}}\\
&\leq\limsup_{N\rightarrow \infty} \frac{m^d(m^s-1)\zeta_\Lambda(s)}{m^d(m^d-1)}\cdot \frac{N-1}{N^{\frac{s}{d}}}=\zeta_\Lambda(s)<\infty.
 \end{align*}

Now we turn our attention to the classical periodic logarithmic energy.  Using (\ref{Flog2}), (\ref{eq19b}), and (\ref{eq20}), we obtain
\begin{align*}
  E_{\log,\Lambda} (\omega_N)&=\lim_{s\rightarrow 0^+}\Gamma\left(\frac{s}{2}\right)E_{s,\Lambda} (\omega_N)=\lim_{s\rightarrow 0^+}\Gamma\left(\frac{s}{2}\right)(I_1+I_2)\\
&\geq N^2\frac{2\pi^{\frac{d}{2}}}{d}\left(1-\delta^{\frac{d}{2}}\right)-N\pi^{\frac{d}{2}}\sum_{w\in\Lambda^*}\int_1^{\frac{1}{\delta}} e^{-\frac{\pi^2|w|^2}{t}}t^{-\frac{d}{2}-1}\mathrm{d}t +O(N)\\
&=N^2\frac{2\pi^{\frac{d}{2}}}{d}\left(1-\delta^{\frac{d}{2}}\right)-N\pi^{\frac{d}{2}}\left(\pi^{-\frac{d}{2}}\log\delta^{-1}+O(1)\right) +O(N)\\
&=\frac{2\pi^{\frac{d}{2}}}{d}N^2-\frac{2\pi^{\frac{d}{2}}}{d}N^2\delta^{\frac{d}{2}}-N\log\delta^{-1}+O(N).
\end{align*}
If we let $\delta=N^{-\frac{2}{d}}$, then we get
\begin{align*}
E_{\log,\Lambda} (\omega_N)=\frac{2\pi^{\frac{d}{2}}}{d}N^2-\frac{2}{d}N\log N+O(N).
\end{align*}
Thus
\begin{align*}
  &\mathcal{E}_{\log,\Lambda} (N)\geq \frac{2\pi^{\frac{d}{2}}}{d}N^2-\frac{2}{d}N\log N+O(N),\\
&\mathcal{E}_{\log,\Lambda}^{\rm cp}(N)+\frac{2}{d}N\log N\geq O(N),
\end{align*}
and we conclude that $\underline{g}_{\log,d}(\Lambda)>-\infty$.

To establish the finiteness of $\overline{g}_{s,d}$, let $m=m_N$ be a positive integer such that $(m-1)^d<N\leq m^d$. Let $\omega^m=\frac{1}{m}\Lambda\cap\Omega_\Lambda$. Then by (\ref{eq23})
\begin{align*}
  E_{s,\Lambda}^{\rm cp}(\omega^m)=m^d(m^s-1)\zeta_\Lambda(s).
\end{align*}
By definition,
\begin{align*}
  \mathcal{E}_{\log,\Lambda}^{\rm cp}(m^d)\leq E_{\log,\Lambda}^{\rm cp}(\omega^m)&=2\left.\frac{\mathrm{d}}{\mathrm{d}s}E_{s,\Lambda}^{\rm cp}(\omega_N)\right|_{s=0}\\
&=2\left. m^d\left(m^s\log m\cdot \zeta_\Lambda(s)+(m^s-1)\zeta'(s)\right)\right|_{s=0}\\
&=2m^d\log m\cdot\zeta_\Lambda(0)=-2m^d\log m=-\frac{2}{d}m^d\log m^d
\end{align*}
Here we use the fact that $\zeta_\Lambda(0)=-1$ for every lattice $\Lambda$ (see the remark following Lemma~\ref{lemEpZeta1}).
We conclude that
 \begin{align*}
\frac{\mathcal{E}_{s,\Lambda}^{\rm cp}(N)}{N}&=\frac{\mathcal{E}_{s,\Lambda}^{\rm cp}(N)}{N(N-1)}\cdot(N-1)\leq\frac{\mathcal{E}_{s,\Lambda}^{\rm cp}(m^d)}{m^d(m^d-1)}\cdot(N-1)\leq -\frac{2}{d}\log m^d \cdot \frac{N-1}{m^d-1}
\end{align*}
This implies
\begin{align*}
\frac{\mathcal{E}_{s,\Lambda}^{\rm cp}(N)+\frac{2}{d}N\log N}{N}&\leq -\frac{2}{d}\log m^d \cdot \frac{N-1}{m^d-1}+\frac{2}{d}\log N,
\end{align*}
which tends to $0$ as $\bnri$, and hence $\overline{g}_{\log,d}(\Lambda)\leq 0$.
\end{proof}

The following lemma establishes scaling properties of the classical periodic energy and will be helpful in establishing independence of the constants $C_{s,d}$ and $C_{\log,d}$ of the lattice $\Lambda$.

\begin{lemma}\label{lem1}
Let $\Lambda$ be a lattice and $\Lambda'=B\Lambda$ be a sublattice of $\Lambda$ (i.e. $B\in GL(d,\mathbb{Z})$), then for any $N>0$,
  \begin{align*}
    \mathcal{E}_{s,\Lambda'}^{\rm cp}( N|\det B|)&\leq|\det B|  \mathcal{E}_{s,\Lambda}^{\rm cp}(N)+N|\det B|(\zeta_\Lambda(s)-\zeta_{\Lambda'}(s)),\\
\mathcal{E}_{\log,\Lambda'}^{\rm cp}(N|\det B|)&\leq|\det B|  \mathcal{E}_{\log,\Lambda}^{\rm cp}(N)+2N|\det B|(\zeta_\Lambda'(0)-\zeta_{\Lambda'}'(0)).
  \end{align*}
\end{lemma}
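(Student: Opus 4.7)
Write $r := |\det B|$. Since $\Lambda' = B\Lambda$ is a sublattice of $\Lambda$ of index $r$, I would first fix coset representatives $y_1 = 0, y_2, \ldots, y_r \in \Lambda$ of $\Lambda/\Lambda'$ and choose an optimal $N$-point configuration $\omega_N = \{x_1, \ldots, x_N\} \subset \Omega_\Lambda$ for $\mathcal{E}_{s,\Lambda}^{\rm cp}(N)$ (existence follows from lower semi-continuity of the energy and compactness of the flat torus $\R^d/\Lambda$). The natural test configuration for $\Lambda'$ is
\[
\omega^* := \{y_i + x_j : 1 \le i \le r,\ 1 \le j \le N\},
\]
and a quick check shows that these $Nr$ points are pairwise non-congruent modulo $\Lambda'$: a coincidence $y_i + x_j \equiv y_k + x_l \pmod{\Lambda'}$ would force $x_j \equiv x_l \pmod{\Lambda}$ (since $y_i - y_k \in \Lambda$), hence $j = l$, after which $y_i \equiv y_k \pmod{\Lambda'}$ gives $i = k$.

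The main step is to expand $E_{s,\Lambda'}^{\rm cp}(\omega^*)$ and split the sum over distinct pairs $((i,j),(k,l))$ according to whether $j = l$ or $j \ne l$. For $j = l$ (hence $i \ne k$), the inner sum is $\sum_{i \ne k} \zeta_{\Lambda'}(s; y_i - y_k)$, which by the $\Lambda'$-periodicity of $\zeta_{\Lambda'}(s;\cdot)$ together with Lemma~\ref{zetaidents} equals $\zeta_\Lambda(s) - \zeta_{\Lambda'}(s)$ for each fixed $k$; summing over the $r$ choices of $k$ and the $N$ common values of $j = l$ produces a contribution of $Nr(\zeta_\Lambda(s) - \zeta_{\Lambda'}(s))$. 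For $j \ne l$, I would apply the decomposition identity
\[
\sum_{i=1}^r \zeta_{\Lambda'}(s; x + y_i) = \zeta_\Lambda(s; x),
\]
which is immediate for $s > d$ from the disjoint union $\Lambda = \bigcup_i (y_i + \Lambda')$ and extends to $s \ne d$ by analytic continuation; applied with $x = x_j - x_l - y_k$ and combined with the $\Lambda$-periodicity of $\zeta_\Lambda(s;\cdot)$, the sum over $i$ collapses to $\zeta_\Lambda(s; x_j - x_l)$, and summing the remaining $r$ choices of $k$ and the pairs $j \ne l$ yields $r \cdot E_{s,\Lambda}^{\rm cp}(\omega_N)$. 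Combining the two contributions and invoking optimality of $\omega_N$ delivers the Riesz inequality.

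For the logarithmic bound I would repeat the identical bookkeeping with $2\zeta'_\Lambda(0;\cdot)$ replacing $\zeta_\Lambda(s;\cdot)$. The required identities
\[
\sum_{i \ne k}\zeta_{\Lambda'}'(0; y_i - y_k) = \zeta_\Lambda'(0) - \zeta_{\Lambda'}'(0) \qquad\text{and}\qquad \sum_{i=1}^r \zeta_{\Lambda'}'(0; x + y_i) = \zeta_\Lambda'(0; x)
\]
follow by differentiating their Riesz counterparts at $s = 0$, which is legitimate since both sides are analytic on $\C \setminus \{d\}$ by \eqref{Fsdef} and Lemma~\ref{lemEpZeta1}. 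The only real subtlety in the argument is combinatorial: one must keep careful track of whether each factor of $r$ arises from summing over the $y_i$'s in the decomposition identity (in the off-diagonal-in-$j$ block) or from summing over the $y_k$'s after applying Lemma~\ref{zetaidents} (in the diagonal-in-$j$ block). No deeper analytic obstacle arises.
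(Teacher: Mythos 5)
Your proof is correct and follows essentially the same strategy as the paper: the paper also builds the test configuration by periodizing $\omega_N$ by $\Lambda$ and intersecting with $\Omega_{\Lambda'}$ (which is precisely your $\{y_i+x_j\}$ modulo $\Lambda'$ with representatives $y_i\in\Lambda\cap\Omega_{\Lambda'}$), splits the energy sum into the diagonal-in-$j$ and off-diagonal-in-$j$ blocks, applies Lemma~\ref{zetaidents} to the former and the coset-decomposition identity $\sum_i\zeta_{\Lambda'}(s;x+y_i)=\zeta_\Lambda(s;x)$ to the latter, and obtains the logarithmic case by differentiating at $s=0$. The only cosmetic difference is that the paper proves the exact identity for an arbitrary $\omega_N$ and then takes the infimum, whereas you fix an optimal $\omega_N$ at the outset (relying on existence of minimizers); both are fine.
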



\begin{proof}
  For any $\omega_N=(x_j)_{j=1}^N \in(\Omega_{\Lambda})^N$, let $S(\omega_N)=(\omega_N+\Lambda)\cap\Omega_{\Lambda'}$. Then $S(\omega_N)$ is a $(N|\det B|)$-point configuration in $\Omega_{\Lambda'}$ and
  \begin{align}
    E_{s,\Lambda'}^{\rm cp}(S(\omega_N))&=\sum_{\substack{x,y\in S(\omega_N)\\x\neq y}} \zeta_{\Lambda'}(s,x-y)=\sum_{j,k}\sum_{\substack{r \in \Lambda, x_j+r\in\Omega_{\Lambda'}\\t \in \Lambda, x_k+t\in\Omega_{\Lambda'}\\x_j+r\neq x_k+t}}\zeta_{\Lambda'}(s;x_j+r-x_k-t)\nonumber\\
    &=\sum_{j\neq k}\sum_{\substack{r \in \Lambda\cap\Omega_{\Lambda'}\\t \in \Lambda\cap\Omega_{\Lambda'}}}\zeta_{\Lambda'}(s;x_j-x_k+r-t)+\sum_{j=1}^N\sum_{\substack{r \in \Lambda\cap\Omega_{\Lambda'}\\t \in \Lambda\cap\Omega_{\Lambda'}\\r\neq t}}\zeta_{\Lambda'}(s;r-t)\nonumber\\
    &=\sum_{j\neq k}\sum_{r \in \Lambda\cap\Omega_{\Lambda'}}\zeta_\Lambda(s;x_j-x_k)+N\sum_{r \in \Lambda\cap\Omega_{\Lambda'}}(\zeta_\Lambda(s)-\zeta_{\Lambda'}(s))\nonumber\\
    &=|\det B| \cdot E_{s,\Lambda}^{\rm cp}(\omega_N)+N|\det B|(\zeta_\Lambda(s)-\zeta_{\Lambda'}(s)), \label{eq24}
  \end{align}
where we used Lemma \ref{zetaidents}.  Taking the infimum over all configurations $(x_j)_{j=1}^N \in(\Omega_{\Lambda})^N$, we conclude that
  \begin{align*}
    \mathcal{E}_{s,\Lambda'}^{\rm cp}(N|\det B|)&\leq \inf_{\omega_N\in(\Omega_{\Lambda})^N} E_{s,\Lambda'}^{\rm cp}(S(\omega_N))=|\det B| \cdot \mathcal{E}_{s,\Lambda}^{\rm cp}(N)+N|\det B|(\zeta_\Lambda(s)-\zeta_{\Lambda'}(s)).
  \end{align*}
The logarithmic case follows from this by differentiating (\ref{eq24}) and evaluating at $s=0$.
\end{proof}

\begin{corollary}\label{cor1}
  For any positive integers $m$ and $N$, we have
  \begin{align} \label{cor1eq1}
    &\frac{\mathcal{E}_{s,\Lambda}^{\rm cp}(m^d N)}{(m^d N)^{1+\frac{s}{d}}}\leq \frac{\mathcal{E}_{s,\Lambda}^{\rm cp}( N)}{N^{1+\frac{s}{d}}}+\frac{(1-m^{-s})\zeta_\Lambda(s)}{N^\frac{s}{d}},\\\label{cor1eq2}
&\frac{\mathcal{E}_{\log,\Lambda}^{\rm cp}(m^dN)+\frac{2}{d}m^dN\log(m^dN)}{m^d N}\leq \frac{\mathcal{E}_{\log,\Lambda}^{\rm cp}(N)+\frac{2}{d}N\log N}{N}.
  \end{align}
\end{corollary}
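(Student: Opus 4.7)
The plan is to apply Lemma~\ref{lem1} in the setup where the \emph{finer} lattice is $\tilde\Lambda := \tfrac{1}{m}\Lambda$ and the sublattice is $\Lambda$ itself. In Lemma~\ref{lem1}'s notation this corresponds to the pair $(\Lambda,\Lambda')\mapsto(\tilde\Lambda,\Lambda)$ with $B=mI$, so $|\det B|=m^d$. This converts a sharp energy bound for $N$-point configurations in $\Omega_{\tilde\Lambda}=\Omega_\Lambda$ into one for $m^dN$-point configurations in $\Omega_\Lambda$.

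Before invoking the lemma I would record the two scaling identities that arise from $\tilde\Lambda=\tfrac{1}{m}\Lambda$. For $s>d$ (and then everywhere on $\mathbb C\setminus\{d\}$ by analytic continuation) one reads off from the defining series that
\[
\zeta_{\tilde\Lambda}(s)=m^s\zeta_\Lambda(s),\qquad \zeta_{\tilde\Lambda}(s;x)=m^s\zeta_\Lambda(s;mx),
\]
and consequently
\[
\mathcal{E}_{s,\tilde\Lambda}^{\rm cp}(N)=m^s\,\mathcal{E}_{s,\Lambda}^{\rm cp}(N),
\]
since scaling configurations by $m$ is a bijection between $(\Omega_{\tilde\Lambda})^N$ and $(\Omega_\Lambda)^N$. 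Plugging these into Lemma~\ref{lem1} gives
\[
\mathcal{E}_{s,\Lambda}^{\rm cp}(m^dN)\leq m^{d+s}\mathcal{E}_{s,\Lambda}^{\rm cp}(N)+Nm^d(m^s-1)\zeta_\Lambda(s),
\]
and dividing by $(m^dN)^{1+s/d}=m^{d+s}N^{1+s/d}$ produces \eqref{cor1eq1} after recognising $(m^s-1)/m^s=1-m^{-s}$.

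For \eqref{cor1eq2} I would differentiate the identity $\zeta_{\tilde\Lambda}(s)=m^s\zeta_\Lambda(s)$ in $s$ at $s=0$ and use $\zeta_\Lambda(0)=-1$ (from the remark after Lemma~\ref{lemEpZeta1}) to get
\[
\zeta_{\tilde\Lambda}'(0)-\zeta_\Lambda'(0)=\log m\cdot \zeta_\Lambda(0)=-\log m.
\]
The analogous computation for the Hurwitz version uses $\zeta_\Lambda(0;x)\equiv 0$ to give $\zeta_{\tilde\Lambda}'(0;x)=\zeta_\Lambda'(0;mx)$, hence $F_{\log,\tilde\Lambda}^{\rm cp}(x)=F_{\log,\Lambda}^{\rm cp}(mx)$ and $\mathcal{E}_{\log,\tilde\Lambda}^{\rm cp}(N)=\mathcal{E}_{\log,\Lambda}^{\rm cp}(N)$. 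The log inequality of Lemma~\ref{lem1} then yields
\[
\mathcal{E}_{\log,\Lambda}^{\rm cp}(m^dN)\leq m^d\mathcal{E}_{\log,\Lambda}^{\rm cp}(N)-2Nm^d\log m.
\]
Dividing by $m^dN$ and adding $\tfrac{2}{d}\log(m^dN)=\tfrac{2}{d}\log N+2\log m$ to both sides cancels the $-2\log m$ term and rearranges to \eqref{cor1eq2}.

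There is no serious obstacle here; the only thing that requires a small amount of care is the bookkeeping to verify that the exponent $m^{d+s}$ coming from $|\det B|\cdot m^s$ matches the normalising factor $(m^dN)^{1+s/d}$, and that in the logarithmic case the $\log m$ pieces coming from (a) differentiating $m^s$ in the scaling identity and (b) the $\tfrac{2}{d}\log(m^dN)$ renormalisation conspire to cancel. Both are direct algebraic checks using $\zeta_\Lambda(0)=-1$ and $\zeta_\Lambda(0;x)\equiv 0$.
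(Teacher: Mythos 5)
Your proof is correct and follows essentially the same route as the paper's: both apply Lemma~\ref{lem1} with $B=mI$ and then invoke the homogeneity of $\zeta_\Lambda$, $\zeta_\Lambda'(0)$, $\mathcal{E}^{\rm cp}_{s,\cdot}$, and $\mathcal{E}^{\rm cp}_{\log,\cdot}$ under scaling the lattice by $m$, together with $\zeta_\Lambda(0)=-1$. The only cosmetic difference is that you apply the lemma to the pair $(\tfrac1m\Lambda,\Lambda)$ whereas the paper uses $(\Lambda,m\Lambda)$; these are related by a global rescaling by $m$, which is exactly what the scaling identities absorb, so the resulting chain of inequalities is the same.
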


\begin{proof}
Using Lemma \ref{lem1}, we obtain
  \begin{align*}
    \mathcal{E}_{s,m\Lambda}^{\rm cp}(m^d N)\leq m^d \cdot \mathcal{E}_{s,\Lambda}^{\rm cp}(N)+m^d N\left(\zeta_\Lambda(s)-\zeta_{m \Lambda}(s)\right),\\
\mathcal{E}_{\log,m\Lambda}^{\rm cp}(m^d N)\leq m^d \cdot \mathcal{E}_{\log,\Lambda}^{\rm cp}(N)+2 m^d N\left(\zeta_\Lambda'(0)-\zeta_{m \Lambda}'(0)\right).
  \end{align*}
 Then \eqref{cor1eq1} and \eqref{cor1eq2} follow from
\begin{align}
  \mathcal{E}_{s,m\Lambda}^{\rm cp}(m^d N)&=m^{-s}\mathcal{E}_{s,\Lambda}^{\rm cp}(m^d N), \qquad \zeta_{m\Lambda}(s)=m^{-s}\zeta_\Lambda(s),\label{eq25}\\
\mathcal{E}_{\log,m\Lambda}^{\rm cp}(m^d N)&=\mathcal{E}_{\log,\Lambda}^{\rm cp}(m^d N), \qquad \zeta_{m\Lambda}'(0)=\log m +\zeta_\Lambda'(0), \label{eq26}
\end{align}
where the first identity in (\ref{eq26}) is obtained from the first identity in (\ref{eq25}) using \eqref{Flog2} while the second identity in \eqref{eq26} follows by differentiating the second identity  in (\ref{eq25}) and evaluating at $s=0$.
\end{proof}


We are now ready to prove our main result.
\begin{proof}[Proof of Theorem \ref{thm2}]
By (\ref{eq27}) and (\ref{eq28}) it suffices to show
\begin{align*}
  &\overline{g}_{s,d}(\Lambda)=\underline{g}_{s,d}(\Lambda)=C_{s,d},\\
  &\overline{g}_{\log,d}(\Lambda)=\underline{g}_{\log,d}(\Lambda)=C_{\log,d}-2\zeta'_{\Lambda}(0).
\end{align*}
Fix some positive integer $N_0$. For any $N>N_0$ there exists $m\in\mathbb{N}$ such that $(m-1)^dN_0\leq N<m^d N_0$, using Corollary \ref{cor1} and the fact that $\{\frac{\mathcal{E}_{s,\Lambda}^{\rm cp}(N)}{N(N-1)}\}_{N=2}^\infty$ is an increasing sequence we obtain
\begin{align*}
  \frac{\mathcal{E}_{s,\Lambda}^{\rm cp}(N)}{N^{1+\frac{s}{d}}}&=\frac{\mathcal{E}_{s,\Lambda}^{\rm cp}(N)}{N(N-1)}\cdot\frac{N-1}{N^{\frac{s}{d}}}\leq \frac{\mathcal{E}_{s,\Lambda}^{\rm cp}(m^d N_0)}{m^d N_0(m^d N_0-1)}\cdot \frac{N-1}{N^{\frac{s}{d}}}\\
  &=\frac{\mathcal{E}_{s,\Lambda}^{\rm cp}(m^d N_0)}{(m^d N_0)^{1+\frac{s}{d}}}\cdot \frac{(m^d N_0)^{\frac{s}{d}}}{(m^d N_0-1)}\cdot \frac{N-1}{N^{\frac{s}{d}}}\\
  &\leq \left(\frac{\mathcal{E}_{s,\Lambda}^{\rm cp}(N_0)}{N_0^{1+\frac{s}{d}}}+\frac{(1-m^{-s})\zeta_\Lambda(s)}{N_0^\frac{s}{d}}\right)\cdot \frac{(m^d N_0)^{\frac{s}{d}}}{N^{\frac{s}{d}}}\cdot \frac{N-1}{(m^d N_0-1)}.
\end{align*}
Similarly
\begin{align*}
  \frac{\mathcal{E}_{\log,\Lambda}^{\rm cp}(N)+\frac{2}{d}N\log N}{N}&=\frac{\mathcal{E}_{\log,\Lambda}^{\rm cp}(N)}{N(N-1)}\cdot(N-1)+\frac{2}{d}\log N\\
&\leq \frac{\mathcal{E}_{\log,\Lambda}^{\rm cp}(m^d N_0)}{m^d N_0(m^d N_0-1)}\cdot(N-1)+\frac{2}{d}\log (m^d N_0)\\
&=\left(\frac{\mathcal{E}_{\log,\Lambda}^{\rm cp}(m^d N_0)+\frac{2}{d}m^d N_0 \log(m^d N_0)}{m^d N_0}-\frac{2}{d}\log(m^d N_0)\right)\cdot \frac{N-1}{m^d N_0-1}+\frac{2}{d}\log(m^d N_0)\\
&\leq \left(\frac{\mathcal{E}_{\log,\Lambda}^{\rm cp}(N_0)+\frac{2}{d}N_0\log N_0}{N_0}-\frac{2}{d}\log(m^d N_0)\right)\cdot \frac{N-1}{m^d N_0-1}+\frac{2}{d}\log(m^d N_0).
\end{align*}
Letting $\bnri$ yields
\begin{align*}
  \overline{g}_{s,d}(\Lambda)&=\limsup_{N\rightarrow \infty} \frac{\mathcal{E}_{s,\Lambda}^{\rm cp}(N)}{N^{1+s/d}}\leq \left(\frac{\mathcal{E}_{s,\Lambda}^{\rm cp}(N_0)}{N_0^{1+\frac{s}{d}}}+\frac{\zeta_\Lambda(s)}{N_0^\frac{s}{d}}\right),\\
\overline{g}_{\log,d}(\Lambda)&=\limsup_{N\rightarrow \infty} \frac{\mathcal{E}_{\log,\Lambda}^{\rm cp}(N)+\frac{2}{d}N\log N}{N}\leq \frac{\mathcal{E}_{\log,\Lambda}^{\rm cp}(N_0)+\frac{2}{d}N_0\log N_0}{N_0}.
\end{align*}
Letting $N_0\rightarrow\infty$ through an appropriate subsequence yields
\begin{align*}
\overline{g}_{s,d}(\Lambda)&\leq \liminf_{N_0\rightarrow \infty} \frac{\mathcal{E}_{s,\Lambda}^{\rm cp}(N_0)}{N_0^{1+\frac{s}{d}}}=\underline{g}_{s,d}(\Lambda),\\
\overline{g}_{\log,d}(\Lambda)&\leq \liminf_{N_0\rightarrow \infty} \frac{\mathcal{E}_{\log,\Lambda}^{\rm cp}(N_0)+\frac{2}{d}N_0\log N_0}{N_0}=\underline{g}_{\log,d}(\Lambda).
\end{align*}
Therefore $\overline{g}_{s,d}(\Lambda)=\underline{g}_{s,d}(\Lambda)=:C_{s,d}(\Lambda)$ and $\overline{g}_{\log,d}(\Lambda)=\underline{g}_{\log,d}(\Lambda)=:C_{\log,d}(\Lambda)$.

To show $C_{s,d}(\Lambda)$ is independent of $\Lambda$, let $\Lambda_1=A_1\mathbb{Z}^d$ and $\Lambda_2=A_2\mathbb{Z}^d$ be any two lattices with co-volume $1$. Then $\Lambda_2=Q\Lambda_1$ where $Q=A_2A_1^{-1}$. We can use rational matrices to approximate $Q$, namely, there exists a sequence $Q_m\in\frac{1}{m}GL(d;\mathbb{Z})$ such that $Q_m\rightarrow Q$.

For any lattice $\Lambda$, $mQ_m\Lambda=(m Q_m)\Lambda$ is a sublattice of $\Lambda$ since $mQ_m\in GL(d;\mathbb{Z})$.
Applying Lemma \ref{lem1} to $mQ_m\Lambda$ and $\Lambda$ we get
\begin{align*}
   \mathcal{E}_{s,mQ_m\Lambda}^{\rm cp}(Nm^d|\det Q_m|)\leq m^d|\det Q_m|\mathcal{E}_{s,\Lambda}^{\rm cp}( N)+Nm^d|\det Q_m|\left(\zeta_\Lambda(s)-\zeta_{mQ_m\Lambda}(s)\right).
\end{align*}
Now if we let $\Lambda=Q_m^{-1}\Lambda_2$ we get
\begin{align*}
  \mathcal{E}_{s,m\Lambda_2}^{\rm cp}(Nm^d|\det Q_m|)\leq m^d |\det Q_m|\mathcal{E}_{s,Q_m^{-1}\Lambda_2}^{\rm cp}( N)+Nm^d |\det Q_m|\left(\zeta_{Q_m^{-1}\Lambda_2}(s)-\zeta_{m\Lambda_2}(s)\right).
\end{align*}
Using relation (\ref{eq25}) again implies
\begin{align*}
    &m^{-s}\mathcal{E}_{s,\Lambda_2}^{\rm cp}(Nm^d|\det Q_m|)\leq m^d |\det Q_m|\mathcal{E}_{s,Q_m^{-1}\Lambda_2}^{\rm cp}( N)+Nm^d |\det Q_m|\left(\zeta_{Q_m^{-1}\Lambda_2}(s)-m^{-s}\zeta_{\Lambda_2}(s)\right),
\end{align*}
which can be rewritten as
\begin{align*}
    &\frac{\mathcal{E}_{s,\Lambda_2}^{\rm cp}(Nm^d |\det Q_m|)}{(Nm^d |\det Q_m|)^{1+\frac{s}{d}}}\leq \frac{\mathcal{E}_{s,Q_m^{-1}\Lambda_2}^{\rm cp}( N)}{N^{1+\frac{s}{d}}|\det Q_m|^\frac{s}{d}}+\frac{\zeta_{Q_m^{-1}\Lambda_2}(s)-m^{-s}\zeta_{\Lambda_2}(s)}{N^\frac{s}{d}|\det Q_m|^\frac{s}{d}}.
\end{align*}
Letting $m\rightarrow \infty$ and using Corollary~\ref{CorMinEn} and Lemma \ref{lem17}, we obtain
\begin{align*}
  C_{s,d}(\Lambda_2)\leq \frac{\mathcal{E}_{s,Q^{-1}\Lambda_2}^{\rm cp}(N)}{N^{1+\frac{s}{d}}}+\frac{\zeta_{Q^{-1}\Lambda_2}(s)}{N^\frac{s}{d}}=\frac{\mathcal{E}_{s,\Lambda_1}^{\rm cp}(N)}{N^{1+\frac{s}{d}}}+\frac{\zeta_{\Lambda_1}(s)}{N^\frac{s}{d}}.
\end{align*}
Taking $N\rightarrow \infty$ implies
\begin{align*}
  C_{s,d}(\Lambda_2)\leq C_{s,d}(\Lambda_1).
\end{align*}
By the arbitrariness of $\Lambda_1$ and $\Lambda_2$ we must have $C_{s,d}(\Lambda)\equiv C_{s,d}$ which is independent of $\Lambda$.

For the logarithmic case, we apply Lemma \ref{lem1} to $mQ_m\Lambda$ and $\Lambda$ to deduce
\begin{align*}
   \mathcal{E}_{\log,mQ_m\Lambda}^{\rm cp}(Nm^d|\det Q_m|)\leq m^d|\det Q_m|\mathcal{E}_{\log,\Lambda}^{\rm cp}( N)+2Nm^d|\det Q_m|\left(\zeta'_\Lambda(0)-\zeta'_{mQ_m\Lambda}(0)\right).
\end{align*}
Now if we let $\Lambda=Q_m^{-1}\Lambda_2$ we have
\begin{align*}
  \mathcal{E}_{\log,m\Lambda_2}^{\rm cp}(Nm^d |\det Q_m|)\leq m^d |\det Q_m|\mathcal{E}_{\log,Q_m^{-1}\Lambda_2}^{\rm cp}( N)+2Nm^d |\det Q_m|\left(\zeta'_{Q_m^{-1}\Lambda_2}(0)-\zeta'_{m\Lambda_2}(0)\right).
\end{align*}
Using relation (\ref{eq25}) again implies
\begin{align*}
    &\mathcal{E}_{\log,\Lambda_2}^{\rm cp}(Nm^d |\det Q_m|)\leq m^d |\det Q_m|\mathcal{E}_{\log,Q_m^{-1}\Lambda_2}^{\rm cp}( N)+2Nm^d |\det Q_m|\left(\zeta'_{Q_m^{-1}\Lambda_2}(0)-\log m-\zeta'_{\Lambda_2}(0)\right),
\end{align*}
which can be rewritten as
\begin{align*}
    &\frac{\mathcal{E}_{\log,\Lambda_2}^{\rm cp}(Nm^d|\det Q_m|)}{Nm^d|\det Q_m|}\leq \frac{\mathcal{E}_{\log,Q_m^{-1}\Lambda_2}^{\rm cp}( N)}{N}+2\left(\zeta'_{Q_m^{-1}\Lambda_2}(0)-\log m-\zeta'_{\Lambda_2}(0)\right).
\end{align*}
Therefore,
\begin{align*}
    &\frac{\mathcal{E}_{\log,\Lambda_2}^{\rm cp}(Nm^d |\det Q_m|)+\frac{2}{d}Nm^d |\det Q_m|\log(Nm^d |\det Q_m|)}{Nm^d |\det Q_m|}\\ \leq& \frac{\mathcal{E}_{\log,Q_m^{-1}\Lambda_2}^{\rm cp}( N)+\frac{2}{d}N\log N}{N}+2\left(\zeta'_{Q_m^{-1}\Lambda_2}(0)-\zeta'_{\Lambda_2}(0)\right)+\frac{2}{d}\log|\det Q_m|.
\end{align*}
Now let $m\rightarrow \infty$ and using Corollary~\ref{CorMinEn}   and Lemma \ref{lem17}
we obtain \begin{align*}
  C_{\log,d}(\Lambda_2)\leq \frac{\mathcal{E}_{\log,Q^{-1}\Lambda_2}^{\rm cp}( N)+\frac{2}{d}N\log N}{N}+2\left(\zeta'_{Q^{-1}\Lambda_2}(0)-\zeta'_{\Lambda_2}(0)\right).
\end{align*}
Taking $N\rightarrow \infty$ implies
\begin{align*}
  C_{\log,d}(\Lambda_2)\leq C_{\log,d}(\Lambda_1)+2\left(\zeta'_{\Lambda_1}(0)-\zeta'_{\Lambda_2}(0)\right).
\end{align*}
By symmetry
\begin{align*}
  C_{\log,d}(\Lambda_1)\leq C_{\log,d}(\Lambda_2)+2\left(\zeta'_{\Lambda_2}(0)-\zeta_{\Lambda_1}'(0)\right).
\end{align*}
It follows that
\begin{align*}
C_{\log,d}(\Lambda_1)+2\zeta_{\Lambda_1}'(0)=C_{\log,d}(\Lambda_2)+2\zeta_{\Lambda_2}'(0).
\end{align*}
Hence, if we define $C_{\log,d}:=C_{\log,d}(\Lambda)+2\zeta_{\Lambda}'(0)$ for any lattice $\Lambda$ of co-volume $1$, then this quantity is in fact independent of the choice of lattice $\Lambda$, which is what we wanted to show.
\end{proof}

\begin{section}{Proof of Theorem~\ref{shellthm}}\label{sect5}

\newcommand{\mcv}{\Lambda}
\newcommand{\lri}{L\to \infty}
\newcommand{\rmp}{p}
Recall that throughout this section  $\Lambda=A\Z^d$ denotes a $d$-dimensional lattice in $\R^d$ with fundamental domain $\Omega=\Omega_{\Lambda}$ and co-volume 1.  The $j$th column of the matrix $A$ is denoted by $v_j$.
First we establish the following lemma that will be used in the proof of Theorem~\ref{shellthm}.

\begin{lemma}\label{weaksphere}
If $s\leq d$ and $M\in\N$, then the unique weak limit of the measures
\[
\mu_L:=\frac{\sum_{v\in\mcv:M<|v|\leq L}|v|^{-s}\delta_{\frac{v}{|v|}}}{\sum_{v\in\mcv:M<|v|\leq L}|v|^{-s}}
\]
as $\lri$ is volume measure on the sphere $\Sph^{d-1}\subseteq\R^d$.
\end{lemma}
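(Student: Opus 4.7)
The plan is to reduce the weak convergence of $\mu_L$ to a lattice-point counting asymptotic on spherical sectors. For any Borel set $E\subseteq \Sph^{d-1}$ with $\sigma(\partial E)=0$ (where $\sigma$ denotes surface measure on $\Sph^{d-1}$), define the sector-counting function
\[
N_E(r):=\#\{v\in\Lambda : |v|\leq r,\ v/|v|\in E\},\qquad r>0.
\]
Since $\Lambda$ has co-volume $1$, the classical lattice-point principle, applied to the bounded Jordan-measurable set $S_E:=\{x\in\R^d:|x|\leq 1,\ x/|x|\in E\}$, yields
\[
N_E(r)=\mathrm{vol}(S_E)\,r^d+o(r^d)=\frac{\sigma(E)}{d}\,r^d+o(r^d)\qquad(r\to\infty),
\]
using the polar-coordinate identity $\mathrm{vol}(S_E)=\sigma(E)/d$.

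I would then apply Stieltjes integration by parts to write
\[
S_E(L):=\!\!\sum_{\substack{v\in\Lambda\\ M<|v|\leq L\\ v/|v|\in E}}\!\!|v|^{-s}=L^{-s}N_E(L)-M^{-s}N_E(M)+s\int_M^L r^{-s-1}N_E(r)\,dr,
\]
and substitute the asymptotic above. A direct computation gives $S_E(L)=\tfrac{\sigma(E)}{d-s}L^{d-s}+o(L^{d-s})$ when $s<d$, and $S_E(L)=\sigma(E)\log L+o(\log L)$ when $s=d$; in both cases the $o(r^d)$ term in $N_E(r)$ contributes only to the lower-order part of the integral (for the former this uses $\int_M^L r^{-s-1}\cdot o(r^d)\,dr=o(L^{d-s})$ for $s<d$; for the latter, $\int_M^L r^{-1}\cdot o(1)\,dr=o(\log L)$).

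Since these leading coefficients are proportional to $\sigma(E)$, dividing by the corresponding quantity with $E=\Sph^{d-1}$ gives
\[
\mu_L(E)=\frac{S_E(L)}{S_{\Sph^{d-1}}(L)}\longrightarrow \frac{\sigma(E)}{\sigma(\Sph^{d-1})}\qquad(L\to\infty).
\]
As this convergence holds for every $\sigma$-continuity set $E$, the portmanteau theorem forces $\mu_L$ to converge weakly to normalized surface measure on $\Sph^{d-1}$, which is automatically the unique weak limit.

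The main potential obstacle is verifying the lattice-point asymptotic $N_E(r)=\sigma(E)r^d/d+o(r^d)$ for general Borel sets $E$ with $\sigma(\partial E)=0$. The classical estimate $\#(\Lambda\cap rS)=\mathrm{vol}(S)r^d+O(r^{d-1})$ handles $S_E$ with piecewise smooth boundary; for rougher $E$, I would sandwich $S_E$ between inner and outer approximations built from finite unions of spherical caps whose symmetric-difference surface measures are arbitrarily small, apply the smooth estimate to each, and then let the approximation refine after passing $r\to\infty$. This sandwich argument yields the $o(r^d)$ error, completing the reduction.
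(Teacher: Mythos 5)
Your proof is correct but takes a genuinely different route from the paper's. The paper proceeds by directly comparing the weighted lattice sum $\sum f(v/|v|)|v|^{-s}$ to the corresponding integral $\int f(x/|x|)|x|^{-s}\,dm_d(x)$ over the annulus: it introduces a ``lattice floor'' map $\lfloor\cdot\rfloor_\Lambda$ to pass from the integral to the sum, then bounds the resulting error terms coming from (i) the outer boundary shell, (ii) the difference $|x|^{-s}-|\lfloor x\rfloor_\Lambda|^{-s}=O(|x|^{-s-1})$, and (iii) the oscillation of $f$ across a fundamental cell (handled by letting the inner radius $M$ creep to infinity and invoking uniform continuity). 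It then recognizes the resulting integral as rotation-invariant. Your argument instead fixes a continuity set $E\subset\Sph^{d-1}$, counts lattice points in the dilated sector $rS_E$ (a bounded Jordan-measurable set precisely because $\sigma(\partial E)=0$ annihilates the radial part of $\partial S_E$ via polar coordinates), and then uses Abel/Stieltjes summation to convert the counting asymptotic $N_E(r)=\tfrac{\sigma(E)}{d}r^d+o(r^d)$ into the weighted asymptotic $S_E(L)=\tfrac{\sigma(E)}{d-s}L^{d-s}+o(L^{d-s})$ (or $\sigma(E)\log L+o(\log L)$ when $s=d$), closing with portmanteau. The two approaches use the same underlying input — lattice points equidistribute with density $|\Omega|^{-1}$ — but your decomposition is structurally cleaner: Abel summation cleanly isolates the radial weight from the angular counting, and testing against continuity sets avoids the paper's uniform-continuity-of-$f$ manoeuvre and its somewhat delicate ``let $M=M_L\to\infty$ slowly'' step. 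The paper's version, on the other hand, is self-contained in the sense that it never appeals to the portmanteau theorem and works directly with continuous test functions, which matches how the lemma is actually invoked in Theorem~\ref{shellthm}. Both arguments are sound; yours is slightly more general in that it does not visibly use $s>0$, whereas the paper's Claim \eqref{claim2} is stated for $0<s\le d$.
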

\begin{proof}

First note that as $|v|\to \infty$
\begin{equation}\label{x+v}\begin{split}
|x+v|^{-s}&=|v|^{-s}\left(1+2\frac{x\cdot v}{|v|^2}+\frac{|x|^2}{|v| ^2}\right)^{-s/2}\\ &=
|v|^{-s}\left[1-s\left(\frac{x\cdot v}{|v|^2}+\left(\frac{1}{2}\right)\frac{|x|^2}{|v| ^2}\right) +s(s+2)\left(\frac{ x\cdot v }{|v|^2}\right)^2\right]+O(|v|^{-s-3}),
\end{split}\end{equation}
which  implies:
\begin{equation} \label{x+vx-v}
|x+v|^{-s}+|x-v|^{-s}-2|v|^{-s}= -s\frac{|x|^2}{|v| ^{s+2}}+s(s+2)\left( \frac{ (x\cdot v/|v|)^2 }{|v|^{s+2}}\right) +O(|v|^{-s-3}), \qquad (|v|\to \infty).
\end{equation}



Let $\lfloor\cdot\rfloor_{\Lambda}$ denote \textit{lattice greatest integer function}  defined by
\[
\lfloor x\rfloor_{\Lambda}=A\cdot\sup\left\{k\in\Z^d:x-Ak=\sum_{j=1}^da_jv_j,\, a_j\geq0\right\},
\]
where we take the supremum with respect to the dictionary order on $\Z^d$.  

Using \eqref{x+vx-v} and writing $x=\lfloor x\rfloor_{\Lambda}+\{x\}_{\Lambda}$, where $\{x\}_{\Lambda}\in\Omega$
we obtain
\begin{equation}\label{claim1} |x|^{-s}-\left|\lfloor x\rfloor_{\Lambda}\right|^{-s}=O(|x|^{-s-1}), \qquad |x|\rightarrow\infty.
\end{equation}

\medskip

We shall also need the following result.
\noindent\underline{Claim:} Let $m_d$ denote Lebesgue measure on $\R^d$ and $0<s\le d$. Then
\begin{equation}\label{claim2}
\lim_{\lri}\frac{\sum_{ M<|v|\leq L}|v|^{-s}}{\int_{M<|x|\leq L}|x|^{-s}\mathrm{d}m_d(x)}=|\Omega|^{-1},
\end{equation}
\vspace{2mm}
where the sum  in the numerator of the left hand side of \eqref{claim2} is over $v\in \Lambda$ such that $M<|v|\leq L$.

To prove this claim, we first consider the case $s<d-1$.  We write
\begin{align}\label{ballint}
|\Omega|^{-1}\int_{M<|x|\leq L}|x|^{-s}\mathrm{d}m_d(x)-\sum_{M<|v|\leq L}|v|^{-s}&=|\Omega|^{-1}\int_{M<|x|\leq L}|x|^{-s}-|\lfloor x\rfloor_{\Lambda}|^{-s}\mathrm{d}m_d(x)+\epsilon_{L},
\end{align}
where
\begin{align}\label{errdef}
\epsilon_L=|\Omega|^{-1}\int_{M<|x|\leq L}|\lfloor x\rfloor_{\Lambda}|^{-s}\mathrm{d}m_d(x)-\sum_{M<|v|\leq L}|v|^{-s},
\end{align}
is an error term.  Suppose $(\Omega+v)\cap\{x:|x|=L\}=\emptyset$.  Then the contribution to (\ref{errdef}) from $v$ is zero.  Therefore, to estimate this error term, we need only consider vectors $v$ for which $(\Omega+v)\cap\{x:|x|=L\}\neq\emptyset$ in the sum and vectors $x$ for which $\lfloor x\rfloor_{\Lambda}=v$ and $(\Omega+v)\cap\{x:|x|=L\}\neq\emptyset$ in the integral.  This implies that $\epsilon_{L}=O(L^{d-1-s})$ as $\lri$.  Furthermore, \eqref{claim1} implies that the integral on the right-hand side of (\ref{ballint}) is $O(L^{d-1-s})$ as $\lri$.  Therefore, since $\sum_{M<|v|\leq L}|v|^{-s}$ grows like $L^{d-s}$ as $\lri$, we have proven \eqref{claim2} in the case $s<d-1$.

In the case $s=d-1$, the same reasoning shows that the integral on the right-hand side of (\ref{ballint}) is $O(\log(L))$ as $\lri$ and $\epsilon_L$ is bounded as $\lri$, while the sum $\sum_{M<|v|\leq L}|v|^{-s}$ grows like $L$ as $\lri$.  Therefore, \eqref{claim2} holds in this case as well.

If $d-1<s<d$, then the integral on the right-hand side of (\ref{ballint}) is $O(1)$ as $\lri$ and $\epsilon_L$ is $O(L^{d-1-s})$ as $\lri$, while the sum $\sum_{M<|v|\leq L}|v|^{-s}$ grows like $L^{d-s}$ as $\lri$ showing  \eqref{claim2} is true in this case as well.

Finally, if $s=d$, then the integral on the right-hand side of (\ref{ballint}) is $O(1)$ as $\lri$ and $\epsilon_L$ is $O(L^{-1})$ as $\lri$, while the sum $\sum_{M<|v|\leq L}|v|^{-s}$ grows like $\log(L)$ as $\lri$.  Therefore, \eqref{claim2} holds  for all $s\in(0,d]$.

\vspace{2mm}

Now we are ready to finish the proof of the lemma.  Let $f$ be any continuous function on $\Sph^{d-1}$ and consider
\begin{align}
\nonumber\lim_{\lri}\int_{\Sph^{d-1}}f(x)\mathrm{d}\mu_L(x)&=\lim_{\lri}\frac{\sum_{M<|v|\leq L}f\left(\frac{v}{|v|}\right)|v|^{-s}}{\sum_{M<|v|\leq L}|v|^{-s}}\\
\label{denom1}&=\left(1+o(1)\right)\lim_{\lri}\frac{|\Omega|\sum_{M<|v|\leq L}f\left(\frac{v}{|v|}\right)|v|^{-s}}{\int_{M<|x|\leq L}|x|^{-s}\mathrm{d}m_d(x)},
\end{align}
where we used \eqref{claim2}.  Notice that the denominator of this last expression grows like $L^{d-s}$ (or $\log(L)$ in the case $s=d$) as $\lri$.  Now consider the expression
\begin{align}\label{fadjust}
\nonumber&|\Omega|\sum_{M<|v|\leq L}f\left(\frac{v}{|v|}\right)|v|^{-s}-\int_{M<|x|\leq L}f\left(\frac{x}{|x|}\right)|x|^{-s}\mathrm{d}m_d(x)\\
&\qquad =\int_{M<|x|\leq L}f\left(\frac{\lfloor x\rfloor_{\Lambda}}{|\lfloor x\rfloor_{\Lambda}|}\right)|\lfloor x\rfloor_{\Lambda}|^{-s}\mathrm{d}m_d(x)-\int_{M<|x|\leq L}f\left(\frac{x}{|x|}\right)|x|^{-s}\mathrm{d}m_d(x)+\epsilon'_L,
\end{align}
where $\epsilon'_L$ is an error term.  An argument similar to the proof of the claim shows that $\epsilon'_L$ is negligibly small compared to the denominator in (\ref{denom1}) as $\lri$, so we may ignore this term when calculating the limit \eqref{errdef}.  We may then write
\begin{align}\label{denom2}
\nonumber&\int_{M<|x|\leq L}f\left(\frac{\lfloor x\rfloor_{\Lambda}}{|\lfloor x\rfloor_{\Lambda}|}\right)|\lfloor x\rfloor_{\Lambda}|^{-s}\mathrm{d}m_d(x)-\int_{M<|x|\leq L}f\left(\frac{x}{|x|}\right)|x|^{-s}\mathrm{d}m_d(x)\\
&\qquad\qquad\qquad=\int_{M<|x|\leq L}\left(f\left(\frac{\lfloor x\rfloor_{\Lambda}}{|\lfloor x\rfloor_{\Lambda}|}\right)-f\left(\frac{x}{|x|}\right)\right)|\lfloor x\rfloor_{\Lambda}|^{-s}\mathrm{d}m_d(x)\\
\nonumber&\qquad\qquad\qquad\qquad\qquad+\int_{M<|x|\leq L}f\left(\frac{x}{|x|}\right)\left(|\lfloor x\rfloor_{\Lambda}|^{-s}-|x|^{-s}\right)\mathrm{d}m_d(x).
\end{align}
Let us examine the first expression on the right-hand side of (\ref{denom2}).  Since the denominator on the right-hand side of (\ref{denom1}) tends to infinity as $\lri$, we may replace $M$ by $M_L$ tending to infinity very slowly as $\lri$ when calculating the first integral on the right-hand side of (\ref{denom2}).  In so doing, we introduce an error term that is negligible compared to the denominator on the right-hand side of (\ref{denom1}) as $\lri$.  Since $M_L\rightarrow\infty$ as $\lri$, the uniform continuity of $f$ implies that
\[
\lim_{\lri}\left|\frac{\int_{M<|x|\leq L}\left(f\left(\frac{\lfloor x\rfloor_{\Lambda}}{|\lfloor x\rfloor_{\Lambda}|}\right)-f\left(\frac{x}{|x|}\right)\right)|\lfloor x\rfloor_{\Lambda}|^{-s}dm_d(x)}{\int_{M<|x|\leq L}|x|^{-s}\mathrm{d}m_d(x)}\right|=0.
\]
Furthermore, \eqref{claim1} implies that the second integral on the right-hand side of (\ref{denom2}) is $O(L^{d-1-s})$ (or $O(\log(L))$ in the case $s=d-1$, or $O(1)$ if $s\in(d-1,d]$) as $\lri$.  We conclude that the limit on the right-hand side of (\ref{denom1}) is the same as
\[
\lim_{\lri}\frac{\int_{M<|x|\leq L}f\left(\frac{x}{|x|}\right)|x|^{-s}\mathrm{d}m_d(x)}{\int_{M<|x|\leq L}|x|^{-s}\mathrm{d}m_d(x)},
\]
which is clearly invariant under any symmetry of the sphere $\Sph^{d-1}$, so it must be equal to
\[
\int_{\Sph^{d-1}}f(z)\mathrm{d}\sigma_{d-1}(z)
\]
as desired.
\end{proof}

Now we are ready to prove Theorem \ref{shellthm}.

\begin{proof}[Proof of Theorem \ref{shellthm}]
Let $x\not\in \Lambda$ be fixed.
\begin{align*}
&\frac{|x|^{-s} +\frac{1}{2}\sum_{v\in\mcv_L\setminus\{0\}}\left(|x+v|^{-s}+|x-v|^{-s}-2|v|^{-s}\right)}{\sum_{v\in\mcv_L\setminus\{0\}}|v|^{-(s+2)}} \\
& \qquad= \frac{\frac{1}{2}\sum_{v\in\mcv_L\setminus\{0\}}-s\frac{|x|^2}{|v| ^{s+2}}+s(s+2)\left( \frac{ (x\cdot v/|v|)^2 }{|v|^{s+2}}\right) }{\sum_{v\in\mcv_L\setminus\{0\}}|v|^{-(s+2)}}+O(|v|^{-1}),
\end{align*}
as $|v|\to \infty$ where the implied constants depend only on $|x|$ and the distance from $x$ to $\Lambda$ and so the above holds uniformly  on compact subsets  of $ \R^d\setminus\Lambda$. Therefore, we have
\begin{equation*}
\begin{split}
\lim_{\lri}&\frac{|x|^{-s} +\frac{1}{2}\sum_{v\in\mcv_L\setminus\{0\}}\left(|x+v|^{-s}+|x-v|^{-s}-2|v|^{-s}\right)}{\sum_{v\in\mcv_L\setminus\{0\}}|v|^{-(s+2)}} \\ &=\lim_{\lri}\int_{\Sph^{d-1}}\left(-s|x|^2+s(s+2)\langle x,z\rangle^2\right)\mathrm{d}\mu_L(z)
\\ &=\int_{\Sph^{d-1}}\left(-s|x|^2+s(s+2)\langle x,z\rangle^2\right) \mathrm{d}\sigma_{d-1}(z)),
\\
&=s\left(\frac{s+2}{d}-1\right)|x|^2,
\end{split}
\end{equation*}
where we used  Lemma \ref{weaksphere}, but with $s$ replaced by $s+2$.  \end{proof}

\end{section}


\bibliographystyle{plain}

\bibliography{P2bibfile}
\end{document}